\documentclass[a4paper,fleqn]{cas-dc}

\usepackage[authoryear]{natbib}
\usepackage{amsmath,amssymb,amsfonts}
\usepackage{amsthm}

\newtheorem{assumption}{Assumption}
\newtheorem{proposition}{Proposition}
\newtheorem{theorem}{Theorem}

\newtheorem{remark}{Remark}

\usepackage{algorithm}
\usepackage{algpseudocode}

\usepackage{graphicx}
\usepackage{caption}
\usepackage{subcaption}
\usepackage{float}
\usepackage{stfloats}

\usepackage{booktabs}
\usepackage{multirow}
\usepackage{tabularx}
\usepackage{threeparttable}
\usepackage{makecell}

\usepackage{url}
\usepackage{pifont}
\usepackage{textcomp}

\def\tsc#1{\csdef{#1}{\textsc{\lowercase{#1}}\xspace}}
\tsc{WGM}
\tsc{QE}
\tsc{EP}
\tsc{PMS}
\tsc{BEC}
\tsc{DE}

\begin{document}
\let\WriteBookmarks\relax
\def\floatpagepagefraction{1}
\def\textpagefraction{.001}

\shorttitle{Cyber-Resilient Digital Twins}
\shortauthors{Homaei et~al.}

\title[mode=title]{Cyber-Resilient Digital Twins: Discriminating 
Attacks for Safe Critical Infrastructure Control}

\tnotetext[1]{This work is financed by the Junta de Extremadura 
and the European Union (ERDF funds) through the support funds to 
research groups GR24170.}

\author[1]{Mohammadhossein Homaei}[
    orcid=0000-0002-6108-6632]
\cormark[1]
\ead{mhomaein@alumnos.unex.es}
\credit{Conceptualization, Methodology, Software, 
Investigation, Writing -- Original Draft}

\author[2]{Iman Khazrak}[
    orcid=0000-0001-7087-2283]
\ead{ikhazra@bgsu.edu}
\credit{Methodology, Validation, Formal Analysis, Data Curation}

\author[1]{Rub\'{e}n Molano}[
    orcid=0000-0001-5410-6589]
\ead{rmolano@unex.es}
\credit{Formal Analysis, Methodology, Visualization}

\author[1]{Andr\'{e}s Caro}[
    orcid=0000-0002-6367-2694]
\ead{andresc@unex.es}
\credit{Supervision, Funding Acquisition, 
Writing -- Review \& Editing, Project Administration}

\author[1]{Mar \'{A}vila}[
    orcid=0000-0002-8717-442X]
\ead{mmavila@unex.es}
\credit{Supervision, Writing -- Review \& Editing}

\affiliation[1]{
    organization={Departamento de Ingenier\'{i}a de Sistemas 
        Inform\'{a}ticos y Telem\'{a}ticos, 
        Universidad de Extremadura},
    addressline={Av. Universidad S/N, Escuela Polit\'{e}cnica},
    city={C\'{a}ceres},
    postcode={10003},
    state={Extremadura},
    country={Spain}}

\affiliation[2]{
    organization={Department of Computer Science, 
        Bowling Green State University},
    city={Bowling Green},
    postcode={43403},
    state={OH},
    country={USA}}

\cortext[cor1]{Corresponding author}

\begin{abstract}
Industrial Cyber-Physical Systems (ICPS) face growing threats from cyber-attacks that exploit sensor and control vulnerabilities. Digital Twin (DT) technology can detect anomalies via predictive modelling, but current methods cannot distinguish attack types and often rely on costly full-system shutdowns. This paper presents i-SDT (intelligent Self-Defending DT), combining hydraulically-regularized predictive modelling, multi-class attack discrimination, and adaptive resilient control. Temporal Convolutional Networks (TCNs) with differentiable conservation constraints capture nominal dynamics and improve robustness to adversarial manipulations. A recurrent residual encoder with Maximum Mean Discrepancy (MMD) separates normal operation from single- and multi-stage attacks in latent space. When attacks are confirmed, Model Predictive Control (MPC) uses uncertainty-aware DT predictions to keep operations safe without shutdown. Evaluation on SWaT and WADI datasets shows major gains in detection accuracy, 44.1\% fewer false alarms, and 56.3\% lower operational costs in simulation-in-the-loop evaluation. with sub-second inference latency confirming real-time feasibility on plant-level workstations, i-SDT advances autonomous cyber-physical defense while maintaining operational resilience.
\end{abstract}

\vspace{1cm}
\begin{graphicalabstract}
\vspace*{3cm}
\makebox[\textwidth][c]{%
\includegraphics[width=0.8\textwidth]{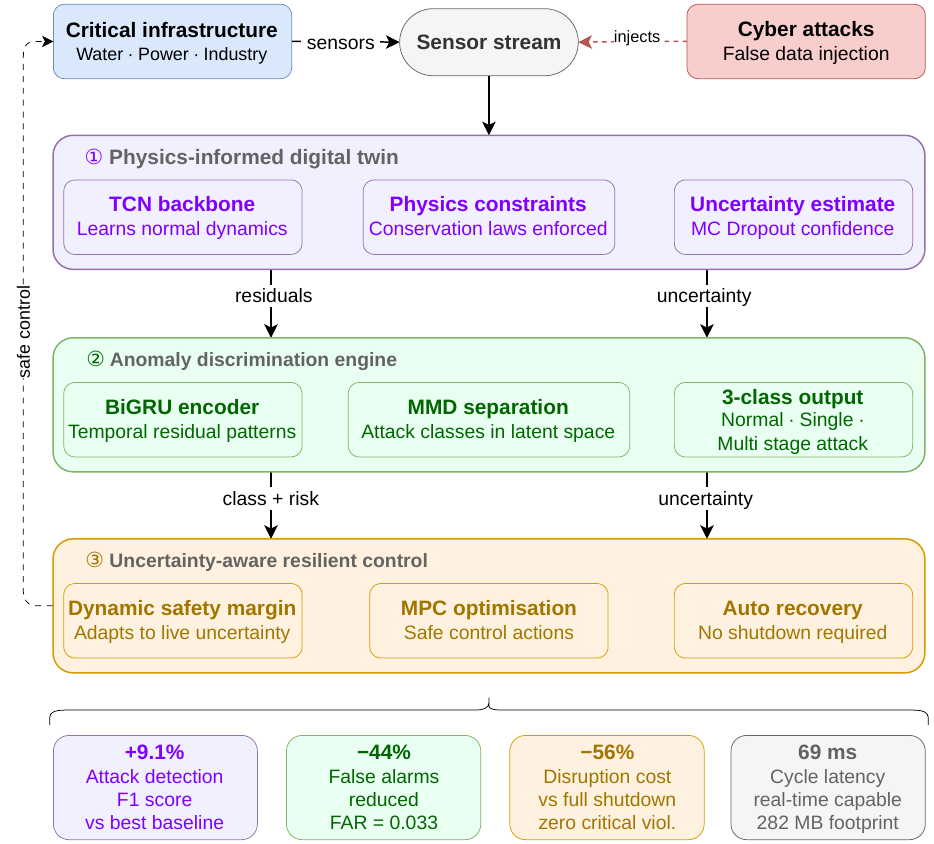}
}
\end{graphicalabstract}


\begin{keywords}
Industrial Cyber-Physical Systems, Digital Twin, Anomaly Detection, Attack Taxonomy Discrimination, Resilient Control, Physics-Informed Machine Learning
\end{keywords}

\maketitle

\section{Introduction}
\label{sec:introduction}
Industrial Cyber-Physical Systems (ICPS) are the main foundation for critical infrastructures like water treatment plants, power grids, chemical factories, and transportation networks. In these systems, physical processes are connected to computing elements to monitor and control the operations in real time. Here, sensors $\mathcal{S}$, actuators $\mathcal{A}$, and control logic $\mathcal{C}$ work together to automate complex tasks. Although this connection makes systems much more efficient, it creates large surfaces for cyber-attacks~\citep{homaei2024review}. If attackers target sensors or controllers, they can change the physical processes, which leads to safety hazards, money loss, or a chain of failures. Because Industry~4.0 is bringing IT and operational networks together, systems that used to be isolated are now in danger from cybercriminals, hacktivists, and nation-state attackers~\citep{Lu2020}. We validate our i-SDT on water treatment testbeds in this work, but the proposed architecture can be applied to other domains simply by replacing~$\mathcal{L}_{\text{phys}}$.

Looking at past incidents shows how dangerous these risks are in reality. For example, in 2021, a hacker got into the Oldsmar water facility in Florida and almost increased the chemical levels to a dangerous point. This was only stopped because the operators noticed it in time. Before that, the Ukraine power grid attack in 2015 used weaknesses in the control systems and caused a blackout for 230,000 people. Also, the TRITON malware in 2017 attacked the safety instrumented systems (SIS) in a petrochemical plant. It tried to stop the emergency shutdown systems, which was the first time an attack specifically wanted to cause direct physical damage. Usually, these kinds of attacks use False Data Injection (FDI) methods. This helps them easily bypass normal IT Intrusion Detection Systems (IDS), because standard IDS cannot see small physical changes and cannot actively protect the system without the risk of stopping the production line~\citep{Lin2018}.

To solve this problem, Digital Twin (DT) technology can act as a very strong defense. It makes a virtual copy of the physical system that is always synced with the real sensor data. This allows the system to predict the normal behavior and easily find any strange changes or anomalies~\citep{Eckhart2019DTCSA, Lu2025DTMicrogrid, Repetto2026}.
 Modern DTs using LSTM, TCN, or autoencoders achieve over 85\% detection on SWaT and WADI \citep{Elnour2021LSTM}, \citep{Deng2021GDN}. However, they mostly classify events as normal or abnormal, cannot distinguish attacks from equipment faults, fail to separate single-stage $\mathcal{A}_S$ from multi-stage $\mathcal{A}_M$ attacks, and often rely on full shutdowns, causing high costs (\$100K–\$5M) and long restart times.

To address these challenges, we propose the i-SDT framework, featuring: (1) a physics-informed DT using mass-balance constraints for dynamic prediction (Section~\ref{sec:physics_scope}); (2) an Anomaly Discrimination Engine (ADE) classifying normal, $\mathcal{A}_S$, and $\mathcal{A}_M$ states via residual encoding and MMD; and (3) an uncertainty-aware resilient MPC ensuring safety during attacks. To prevent circular validation, we evaluate the controller using an independent hydraulic simulator (Section~\ref{sec:system_assumption}).
Evaluations on SWaT \citep{Mathur2016SWaT} and WADI \citep{Ahmed2017WADI} show i-SDT increases the $F1$-score by 9.1\% and cuts false alarms by 44.1\% over the best baseline. It also reduces operational costs by 56.3\%. i-SDT is highly efficient for real-time use, requiring only 69.0\,ms per cycle (well within $T_s{=}1$\,s) with a 282\,MB footprint. Our code is open-source.

The rest of this paper is organized as follows. Section~\ref{sec:related} reviews past work on ICPS anomaly detection, DT systems, and resilient control, showing where i-SDT stands. Section~\ref{sec:problem} defines the system model, main threats, and study goals. Section~\ref{sec:methodology} explains the i-SDT framework, including the PI DT, the ADE, and the resilient control part. Section~\ref{sec:results} reports the experiments on the SWaT and WADI datasets. Section~\ref{sec:discussion} discusses the results, deployment aspects, and limitations with future work. Finally, Section~\ref{sec:conclusion} summarizes the contributions and ends the paper.

\section{Related Work}\label{sec:related}

ICPS security has gained significant attention due to recent vulnerabilities in critical infrastructure. We review prior work in three areas: (1) anomaly detection, (2) DT applications, and (3) resilient control, leading to the identified gaps addressed by i-SDT.

\subsection{ICPS Anomaly Detection}
Initial ICPS intrusion detection adapted IT security methods to operational technology but failed against \textit{semantic attacks} where valid messages manipulate physical processes~\citep{Umer2022Survey}. Early data-driven approaches relied on basic autoencoders~\citep{Zhang2019} or LSTMs~\citep{Elnour2021LSTM}. Recently, the field shifted toward capturing complex multivariate dependencies using stochastic or adversarial frameworks like OmniAnomaly~\citep{Su2019OmniAnomaly} and USAD~\citep{Audibert2020USAD}. 

Currently, Graph Neural Networks (GNNs) and attention mechanisms, such as MTAD-GAT~\citep{Zhao2020MTADGAT} and Dual Attention Networks (DAN)~\citep{Xu2025DAN}, are standard for modeling spatial correlations among equipment components. Simultaneously, Transformer-based architectures like TranAD~\citep{Tuli2022TranAD} and LGAT~\citep{Wen2025LGAT} demonstrate superior capability in capturing long-term temporal dependencies. To handle stealthy Advanced Persistent Threats (APTs) at the edge, recent works focus on spatio-temporal dynamics and early-exit mechanisms to reduce computational costs~\citep{Luo2025STMBAD, Kong2025DADN}.

\subsection{Digital Twins for Cybersecurity}
Digital Twins have evolved from passive simulation tools into active cyber-defense mechanisms~\citep{Lu2020, Lu2025DTMicrogrid, Repetto2026}. Recent literature emphasizes integrating physical and process semantics to improve trustworthiness, utilizing process mining~\citep{Vitale2025ProcessMining} or hybrid physics-informed modeling~\citep{Hemmati2026HybridDT}. To overcome the limitations of pure correlation, Causal Digital Twins (CDT) were developed to enable robust anomaly detection and root-cause analysis~\citep{Homaei2026CausalDT}. Furthermore, the paradigm is shifting toward active mitigation. For instance, Zhang et al.~\citep{Zhang2025DTResilientMPC} recently proposed a DT-based resilient Model Predictive Control (MPC) framework that triggers a deterministic rollback mechanism to predict safe states during false data injection attacks.

\subsection{Resilient Control}
Traditional resilient control relies on accurate state estimation. Observer-based techniques aim to estimate actual states despite false data~\citep{Shafei2025SlidingModeObserver}, but standard active compensation often struggles against stealthy attacks due to reconstruction errors. Consequently, recent approaches focus on maintaining system trajectories within safety bounds without exact state reconstruction~\citep{Xin2025LearningResilient}. 

MPC is a prime candidate for resilient operation due to its constraint-handling capabilities~\citep{Rawlings2020MPC}. Recent advancements augment MPC against cyber-attacks using Stackelberg games~\citep{He2026} or switched cost functions against DoS attacks~\citep{Yang2025ResilientMPCDoS}. To counter stealthy FDI attacks, Xie et al.~\citep{Xie2025InescapableSet} designed a robust MPC (RMPC) integrated with inescapable sets to stabilize closed-loop dynamics.

\begin{table}[h]
\centering
\caption{Comparison of i-SDT with State-of-the-Art Approaches}
\label{tab:related_comparison}
\begin{threeparttable}
\resizebox{\columnwidth}{!}{%
\begin{tabular}{lcccc}
\toprule
\textbf{Approach} 
    & \makecell{\textbf{Physics} \\ \textbf{Informed}} 
    & \makecell{\textbf{Attack} \\ \textbf{Taxonomy}} 
    & \makecell{\textbf{Uncertainty} \\ \textbf{Aware}} 
    & \makecell{\textbf{Resilient} \\ \textbf{Control}} \\
\midrule
TranAD (2022)~\citep{Tuli2022TranAD} 
    & \ding{55} & \ding{55} & \ding{55} & \ding{55} \\
LGAT (2025)~\citep{Wen2025LGAT}      
    & \ding{55} & \ding{55} & \ding{55} & \ding{55} \\
CDT (2026)~\citep{Homaei2026CausalDT}
    & \ding{51} & \ding{55} & \ding{55} & \ding{55} \\
DT-MPC (2025)~\citep{Zhang2025DTResilientMPC}
    & \ding{55} & \ding{55} & \ding{55} & \ding{51} \\
Secure RMPC (2025)~\citep{Xie2025InescapableSet}
    & \ding{55} & \ding{55} & \ding{55} & \ding{51} \\
\midrule
\textbf{i-SDT (Proposed)}
    & \ding{51}$^{\dagger}$ & \ding{51} & \ding{51} & \ding{51} \\
\bottomrule
\end{tabular}%
}
\begin{tablenotes}
\footnotesize
\item[$\dagger$] Partial physics encoding: mass-balance constraint on 
tank-level \\ subsystem only. See Section~\ref{sec:physics_scope}.
\end{tablenotes}
\end{threeparttable}
\end{table}

\subsection{Identified Gaps and the i-SDT Positioning} \label{sec:gaps_positioning}

While the aforementioned approaches provide robust theoretical foundations, they exhibit critical limitations for dynamic ICPS. Table~\ref{tab:related_comparison} summarizes these gaps across four dimensions. i-SDT is designed to address the following issues within a single deployable framework:

\begin{enumerate}
    \item \textbf{Physics Blindness \& Binary Outputs:} Most data-driven detectors rely purely on statistical correlations and provide binary alerts. They fail to enforce physical laws and cannot distinguish between single-stage ($\mathcal{A}_S$) and multi-stage ($\mathcal{A}_M$) attacks.
    \item \textbf{Dependence on Known Attack Models:} Game-theoretic approaches~\citep{He2026} and conventional RMPC~\citep{Xie2025InescapableSet} rely on assumed upper bounds of attack vectors. Real-world stealthy threats frequently violate these predefined assumptions.
    \item \textbf{Lack of Data-Driven Uncertainty Integration:} Existing secure controllers and DT-MPC integrations~\citep{Zhang2025DTResilientMPC} separate detection from control logic. They do not dynamically adapt safety margins based on the real-time \textit{epistemic uncertainty} of the predictive model, leading to severe operational disruption costs ($\mathcal{D}_{\text{shutdown}}$) or unsafe control actions.
\end{enumerate}

To bridge these gaps, our proposed i-SDT framework unifies physics-informed prediction, MMD-based attack taxonomy discrimination, and probabilistic resilient control. Unlike deterministic approaches, i-SDT extracts real-time predictive uncertainty directly from the DT to create dynamic safety margins, ensuring safe operation without requiring a full system shutdown.

\section{Problem Statement and System Modeling}\label{sec:problem}

To enhance technical clarity and address the notation ambiguity, key symbols used throughout this manuscript are explicitly defined in Table~\ref{tab:nomenclature}.

\begin{table*}[t]
\scriptsize
\centering
\caption{Nomenclature and Variable Definitions}
\label{tab:nomenclature}
\begin{tabular}{@{}llll@{}}
\toprule
\textbf{Symbol} & \textbf{Description} 
    & \textbf{Symbol} & \textbf{Description} \\
\midrule

\multicolumn{4}{l}{\textit{System States and Signals}} \\[2pt]
$\mathbf{X}(t) \in \mathbb{R}^{d_x}$ 
    & Physical state vector at time $t$
    & $\mathbf{u}(t) \in \mathbb{R}^{d_u}$
    & Control input vector at time $t$ \\
$\mathbf{Y}_{\text{measured}}(t)$ 
    & Observed sensor measurements (potentially compromised)
    & $\hat{\mathbf{Y}}(t)$
    & Digital Twin predicted measurements \\
$\mathbf{R}(t)$ 
    & Residual vector: $\mathbf{Y}_{\text{measured}}(t) - \hat{\mathbf{Y}}(t)$
    & $\mathbf{R}_W(t) \in \mathbb{R}^{W \times d_y}$
    & Sliding residual window of length $W$ \\
$\mathbf{X}_\tau(t)$ 
    & Temporal context window $[\mathbf{X}(t{-}\tau),\ldots,\mathbf{X}(t{-}1)]$
    & $\mathbf{z}(t) \in \mathbb{R}^{2d_h}$
    & ADE latent embedding vector \\

\midrule
\multicolumn{4}{l}{\textit{Noise and Uncertainty}} \\[2pt]
$\boldsymbol{\omega}(t) \sim \mathcal{N}(\mathbf{0}, \Sigma_\omega)$
    & Process noise
    & $\boldsymbol{\epsilon}(t) \sim \mathcal{N}(\mathbf{0}, \Sigma_\epsilon)$
    & Measurement noise \\
$\Sigma_{\hat{Y}}(t)$
    & Predictive uncertainty covariance matrix from DT
    & $\bar{e}$
    & Empirical DT prediction error bound \\
$\sigma_{\text{data}}^2$
    & Data noise variance in DT loss
    & $\delta_{\text{MP}}(t)$
    & Model-plant discrepancy metric (Eq.~\ref{eq:mismatch}) \\

\midrule
\multicolumn{4}{l}{\textit{Attack and Threat Model}} \\[2pt]
$\boldsymbol{\delta}_{\mathcal{A}}(t) \in \mathbb{R}^{d_y}$
    & Injected FDI perturbation vector
    & $\beta_{\text{low}}$
    & Stealthiness bound on attack magnitude \\
$\mathcal{A}_S$
    & Single-stage attack class
    & $\mathcal{A}_M$
    & Multi-stage (coordinated) attack class \\
$\mathcal{N}$
    & Normal operation class
    & $\mathcal{C}(t)$
    & Predicted system status: $\{\mathcal{N}, \mathcal{A}_S, \mathcal{A}_M\}$ \\
$\mathcal{X}_{\text{unsafe}}$
    & Unsafe physical state region
    & $\epsilon_{\text{disc}}$
    & Attack discrimination gap (Eq.~\ref{eq:discrimination_gap}) \\

\midrule
\multicolumn{4}{l}{\textit{Physics-Informed Digital Twin}} \\[2pt]
$\boldsymbol{\Theta}_{\text{DT}}$
    & Trained DT (TCN) parameters
    & $\tau$
    & Temporal context window length \\
$\lambda_{\text{phys}}(t)$
    & Adaptive physics loss weight
    & $\alpha_0$
    & Base physics weight coefficient \\
$\theta_{\text{thresh}}$
    & Normalisation constant for $\lambda_{\text{phys}}$
    & $\Delta t$
    & Sampling interval (1\,s) \\
$\mathcal{I}_{\text{in}},\,\mathcal{I}_{\text{out}}$
    & Inflow / outflow sensor index sets
    & $V_{\text{tank}}$
    & Nominal tank volume (m$^3$) \\
$L,\, D,\, C$
    & Pipe length, diameter, and roughness coefficient
    & $\lambda_p$
    & Weight for pipe pressure loss penalty \\

\midrule
\multicolumn{4}{l}{\textit{Anomaly Discrimination Engine (ADE)}} \\[2pt]
$\boldsymbol{\Theta}_{\text{ADE}}$
    & Trained ADE (BiGRU) parameters
    & $W$
    & Residual window length (default 50) \\
$\beta$
    & MMD regularisation weight
    & $\gamma$
    & Classification confidence threshold \\
$\text{MMD}^2(\mathcal{Z}_1, \mathcal{Z}_2)$
    & Maximum Mean Discrepancy between distributions
    & $\zeta(t)$
    & Stealth risk metric from ADE probabilities \\
$\bar{\zeta}(t)$
    & EMA-smoothed risk metric
    & $\mu$
    & EMA filter coefficient (default 0.2) \\
$\omega$
    & Multi-stage attack weight in $\zeta$ (default 2.0)
    & $\varepsilon_0$
    & Numerical stability constant in $\zeta$ \\

\midrule
\multicolumn{4}{l}{\textit{Resilient MPC Controller}} \\[2pt]
$H$
    & MPC prediction horizon
    & $Q,\,R,\,P$
    & Stage, input, and terminal cost matrices \\
$A(k),\,B(k)$
    & Successive Jacobian linearisation matrices at step $k$
    & $K$
    & Offline LQR stabilising gain \\
$\mathbf{Y}_{\text{safe}}$
    & Safe operating setpoint
    & $\mathbf{Y}_{\min},\,\mathbf{Y}_{\max}$
    & Physical safety bounds \\
$\mathcal{Y}_f$
    & Terminal positively invariant set
    & $\boldsymbol{\epsilon}_{\text{margin}}(k)$
    & Robust safety margin for MPC constraints \\
$\kappa$
    & Uncertainty back-off coefficient (default 1.96)
    & $\eta$
    & Risk back-off coefficient (default 0.5) \\
$\bar{e}$
    & Lipschitz / mismatch bound (Assumption~\ref{ass:mismatch})
    & $L_f$
    & Lipschitz constant of $\mathcal{T}_{\text{DT}}$ \\
$\lambda_d$
    & Decay rate for stealth risk profile
    & $\zeta_{\min}$
    & Minimum baseline for decaying risk metric \\

\midrule
\multicolumn{4}{l}{\textit{Resilience and Cost Metrics}} \\[2pt]
$\mathcal{D}_{\text{shutdown}}$
    & Total disruption cost under full shutdown
    & $\mathcal{D}_{\text{rel}}$
    & Normalised disruption cost ($\mathcal{D}/\mathcal{D}_{\text{shutdown}}$) \\
$T_{\text{recovery}}$
    & Time to return to safe steady state (s)
    & $w_i,\,C_{\text{prod}},\,C_{\text{restart}}$
    & Importance weight, production loss rate, restart cost \\
$T_{\text{confirm}},\,T_{\text{val}}$
    & Recovery confirmation and validation windows
    & $\tau_{\text{recovery}}$
    & Model-plant discrepancy threshold for recovery \\
\bottomrule
\end{tabular}
\end{table*}

\subsection{Hybrid Evaluation Protocol: Datasets and Surrogate Environment}
\label{sec:system_assumption}

Evaluating simultaneous detection and mitigation in ICPS presents a methodological paradox: anomaly detection algorithms must be validated on real-world datasets containing authentic sensor noise and historical FDI patterns, whereas resilient control algorithms require a dynamic, closed-loop environment where state variables react to corrective actuations. Since the SWaT and WADI datasets are static, open-loop recordings, they cannot simulate post-actuation recovery.

To resolve this without compromising empirical validity, we adopt a decoupled, two-phase evaluation methodology. The Anomaly Discrimination Engine (ADE) is trained and evaluated strictly on the authentic SWaT/WADI datasets to ensure real-world detection fidelity. Conversely, to validate the resilient MPC component, we construct a two-tier surrogate evaluation environment. This environment deliberately separates the \textit{learned predictive model} from the \textit{evaluation plant}, providing a rigorous closed-loop testbed while avoiding circular validation.

\subsubsection{Tier~1 — Physics-Based Reference Simulator}
\label{sec:tier1}

A first-principles hydraulic simulator, denoted $\mathcal{P}_{\text{sim}}$, is constructed independently of the learned DT. For the SWaT testbed, $\mathcal{P}_{\text{sim}}$ implements the mass-balance dynamics of all six treatment stages using ordinary differential equations (ODEs) calibrated from the published SWaT physical specifications~\citep{Mathur2016SWaT}:

\begin{equation}\label{eq:ode_plant}
    \frac{d\mathbf{X}(t)}{dt} = \mathbf{f}_{\text{phys}}\!\left(\mathbf{X}(t),
    \mathbf{u}(t)\right) + \boldsymbol{\omega}(t),
    \quad \boldsymbol{\omega}(t)\sim\mathcal{N}(\mathbf{0},\Sigma_\omega),
\end{equation}

where $\mathbf{f}_{\text{phys}}$ encodes tank-level dynamics, pump flow rates, and inter-stage coupling independently derived from process engineering first principles, with no shared parameters with $\boldsymbol{\Theta}_{\text{DT}}$. For WADI, an analogous three-stage distribution network ODE is used. Nominal steady-state trajectories $\mathbf{X}^*$ are validated against the historical normal-operation segment of each dataset (NRMSE~$\leq 4.1\%$ for SWaT, $\leq 5.3\%$ for WADI).

\subsubsection{Tier~2 — Learned DT as Controller Oracle}
\label{sec:tier2}

The trained DT $\mathcal{T}_{\text{DT}}$ serves \emph{exclusively} as the prediction model inside the MPC optimisation (Eq.~\eqref{eq:method_mpc_objective}): it generates the forecast $\hat{\mathbf{Y}}(k{+}1)$ and the uncertainty estimate $\Sigma_{\hat{Y}}(k)$ over the horizon $[t,\, t{+}H{-}1]$. The control action $\mathbf{u}^*(t)$ computed by the MPC is then applied to $\mathcal{P}_{\text{sim}}$, \emph{not} back to the DT itself. The resulting closed-loop state trajectory from $\mathcal{P}_{\text{sim}}$ constitutes the ground truth against which safety constraint satisfaction, recovery time, and disruption cost are measured. 

This separation ensures that prediction errors of $\mathcal{T}_{\text{DT}}$ manifest as genuine tracking deviations in $\mathcal{P}_{\text{sim}}$, providing an honest assessment of control performance under model mismatch. The magnitude of this mismatch is quantified via the \textit{model-plant discrepancy metric}:

\begin{equation}\label{eq:mismatch}
    \delta_{\text{MP}}(t) \;=\;
    \bigl\|\mathbf{Y}_{\text{measured}}(t) - \hat{\mathbf{Y}}(t)\bigr\|_2
    \;\Big/\;
    \bigl\|\mathbf{Y}_{\text{measured}}(t)\bigr\|_2,
\end{equation}

whose empirical distribution over the test set is reported in
Section~\ref{sec:results_resilience} alongside the resilience metrics.

\subsubsection{Scope and Limitations: A Robust Control Perspective} 
\label{sec:scope}

We explicitly acknowledge a fundamental limitation: $\mathcal{P}_{\text{sim}}$ is an ODE-based mathematical representation and cannot perfectly replicate the complex, non-linear hydrodynamics of the physical SWaT and WADI hardware. Consequently, the reported resilient control metrics are strictly simulation-in-the-loop estimations.

However, from the perspective of Robust Control Theory, we formulate this limitation as an evaluation advantage. By artificially injecting increasing levels of model-plant mismatch ($\Sigma_{\omega}^{\text{perturbed}} = (1 + \gamma_{\text{mismatch}}) \Sigma_{\omega}^{\text{nominal}}$ with $\gamma_{\text{mismatch}} \in \{0.05, 0.10, 0.15, 0.20\}$), we construct a deliberate ``worst-case scenario'' environment. The objective of this tier is not to claim that our ODE is a flawless replica of the plant, but to demonstrate that the proposed controller maintains Input-to-State Stability (ISS) even when the predictive DT model significantly diverges from the evaluation plant. Demonstrating a 56.3\% disruption reduction under these degraded conditions proves that the uncertainty-aware margins ($\boldsymbol{\epsilon}_{\text{margin}}$) can safely absorb severe epistemic uncertainties without causing catastrophic failure.

\subsection{Industrial CPS Dynamics Model}\label{sec:system_model}
We formally model the ICPS progressing in discrete time $t \in \mathbb{N}$. The physical state evolution and sensor observation are governed by:
\begin{equation}\label{eq:normal_dynamics}
    \mathbf{X}(t+1) = f(\mathbf{X}(t), \mathbf{u}(t)) + \boldsymbol{\omega}(t),
\end{equation}

where $f(\cdot)$ models physical process and $\boldsymbol{\omega}(t) \sim \mathcal{N}(0, \Sigma_{\omega})$ represents bounded disturbances. Sensors report noisy measurements via
\begin{equation}\label{eq:normal_measurement}
    \mathbf{Y}_{\text{measured}}(t) = h(\mathbf{X}(t)) + \boldsymbol{\epsilon}(t),
\end{equation}
with observation function $h(\cdot)$ and noise $\boldsymbol{\epsilon}(t) \sim \mathcal{N}(0, \Sigma_{\epsilon})$ (Eq.~\eqref{eq:normal_measurement}).

This model reflects the standard abstraction of water and industrial processes; however, challenges arise when cyber manipulation affects data streams that are not directly observable from physical evolution alone.

\subsection{Adversarial Manipulation and Inconsistency}\label{sec:attack_model}
Cyber threat considered is false data injection (FDI) attack where adversary can perturb sensor channels or control paths. Under attack, measurement becomes
\begin{equation}\label{eq:fdi_attack}
    \mathbf{Y}_{\text{measured}}(t) = h(\mathbf{X}(t)) + \boldsymbol{\epsilon}(t) + \boldsymbol{\delta}_{\mathcal{A}}(t),
\end{equation}
where $\boldsymbol{\delta}_{\mathcal{A}}(t) \in \mathbb{R}^{d_y}$ denotes injected perturbation. Adversary aims to push physical state toward unsafe region $\mathcal{X}_{\text{unsafe}}$ while keeping observed data close to normal ranges, creating physical–cyber inconsistency: true dynamics follow (\ref{eq:normal_dynamics}), but reported data do not.

We consider single-stage attacks $\mathcal{A}_S$ and coordinated multi-stage attacks $\mathcal{A}_M$, where the latter manipulates several correlated variables over time. Stealthy attacks satisfy $\|\boldsymbol{\delta}_{\mathcal{A}}(t)\|  \leq \beta_{\text{low}}$, where $\beta_{\text{low}} > 0$ is a  dataset-specific bound below which naive residuals remain within  normal bounds, and are especially difficult for classical detectors  to identify.

\begin{quotation}
\noindent A theoretical ``perfect'' physics-aware attacker could bypass the DT by manipulating multiple sensors to satisfy all conservation equations simultaneously, forcing the physics residual $\mathcal{L}_{\text{phys}}$ to zero. However, executing such an attack necessitates simultaneous, white-box read/write access to all distributed Programmable Logic Controllers (PLCs) across isolated operational sub-networks. In real-world ICPS architectures, such omnipotent and synchronous access is operationally highly improbable. 

Therefore, our threat model specifically defines a \textit{Distributed Architecture Constrained} attacker. This adversary attempts to remain stealthy locally but lacks the global network access required to perfectly forge the system-wide mass-balance continuously. The key vulnerability exploited by our i-SDT framework is the \textbf{conservation-correlation trade-off}: because the attacker is constrained by the distributed architecture, their localized data injections inevitably break the broader multivariate statistical correlations. The Anomaly Discrimination Engine (ADE) captures this forced spatial-temporal distortion in the latent space, ensuring a robust defense-in-depth mechanism against realistic, topology-constrained adversaries.
\end{quotation}

\subsection{DT Prediction and Residual Behaviour}\label{sec:dt_model}
Data-driven DT, denoted $\mathcal{T}_{\text{DT}}$, is trained from normal operation. It learns approximate predictive mapping $\hat{f}$ from temporal context $\mathbf{X}_{\tau}(t) = [\mathbf{X}(t-\tau), \ldots, \mathbf{X}(t-1)]$. DT prediction is
\begin{equation}\label{eq:dt_prediction}
    \hat{\mathbf{Y}}(t) = \mathcal{T}_{\text{DT}}(\mathbf{X}_{\tau}(t-1); \boldsymbol{\Theta}_{\text{DT}}),
\end{equation}
where $\boldsymbol{\Theta}_{\text{DT}}$ are parameters.

Residual between measured and predicted value is
\begin{equation}\label{eq:residual}
    \mathbf{R}(t) = \mathbf{Y}_{\text{measured}}(t) - \hat{\mathbf{Y}}(t).
\end{equation}
Under normal conditions, residuals follow $\mathbf{R}(t) \sim \mathcal{N}(\mathbf{0}, \Sigma_R)$, 
where $\Sigma_R$ is the empirical residual covariance estimated from 
the normal-operation training set, due to sensor noise and modeling errors. During attacks, residuals develop temporal and cross-variable correlations reflecting manipulation nature. This motivates refined discrimination beyond classical anomaly detection.

\subsection{Identified Gaps in Existing Approaches}\label{sec:gaps}
Most anomaly detectors use simple binary view (Normal vs.\ Anomaly) with thresholds like $\|\mathbf{R}(t)\| > \tau$, ignoring difference between single-stage and multi-stage attacks and missing useful temporal patterns in coordinated actions.

Formally, ability to discriminate between $\mathcal{A}_S$ and $\mathcal{A}_M$ can be written as
\begin{equation}\label{eq:discrimination_gap}
    \epsilon_{\text{disc}}(\mathcal{A}_S, \mathcal{A}_M) =
    \left| \mathbb{E}_{\mathcal{A}_S}[\|\mathbf{R}(t)\|] -
    \mathbb{E}_{\mathcal{A}_M}[\|\mathbf{R}(t)\|] \right| \approx 0.
\end{equation}
Because both attack types may keep residual magnitude small, classical threshold-based detectors fail to separate them.

Second limitation appears in industrial mitigation strategies. Many facilities react to detected anomalies with full shutdown. Disruption cost over $[t_d, t_r]$ is
\begin{equation}\label{eq:disruption_cost}
\begin{aligned}
\mathcal{D}_{\text{shutdown}} =\;&
\sum_{i \in \mathcal{I}_{\text{crit}}} w_i 
\int_{t_d}^{t_r} \left| Y_i(t) - Y^*_{i,\text{safe}} \right| dt \\[4pt]
&+ C_{\text{prod}} (t_r - t_d)
+ C_{\text{restart}}.
\end{aligned}
\end{equation}

where $\mathcal{I}_{\text{crit}}$ denotes the set of critical sensor indices, $w_i > 0$ is a process-specific importance weight for  sensor $i$, $C_{\text{prod}}$ [\$/s] is the production loss rate, and $C_{\text{restart}}$ [\$] is the one-time restart cost.  The normalised metric $\mathcal{D}_{\text{rel}} =  \mathcal{D} / \mathcal{D}_{\text{shutdown}}$ is used for dataset-agnostic comparison in Section~\ref{sec:results_resilience}.

$\mathcal{D}_{\text{shutdown}}$ may reach very high values  (\$100K--\$5M per incident~\citep{homaei2024review}).  Existing mitigation methods do not exploit predictive ability of DTs, leading to unnecessary opportunity cost $\Delta\mathcal{D}$ (Eq.~\eqref{eq:opportunity_cost}).

\begin{equation}\label{eq:opportunity_cost}
    \Delta \mathcal{D} = \mathcal{D}_{\text{shutdown}} - \mathcal{D}_{\text{optimal}}.
\end{equation}

\subsection{Objectives and Scope of This Work}\label{sec:gaps_objectives}
This work addresses these gaps by designing DT-enhanced framework that (i) discriminates between normal behaviour, single-stage attacks, and multi-stage attacks, and (ii) performs mitigation without full shutdown relying on DT-based safe predictions. Classification task is
\begin{equation}\label{eq:classes}
    \mathcal{C}(t) \in \{\mathcal{N}, \mathcal{A}_S, \mathcal{A}_M\}.
\end{equation}
The goal is to achieve high discrimination accuracy (Eq.~\eqref{eq:classes}), reduced false alarms, and real-time performance for edge deployments. The mitigation strategy keeps the system inside safe operational boundaries while reducing the overall disruption cost relative to a conventional shutdown.

\section{Methodology}\label{sec:methodology}
We now describe the three tightly integrated modules of i-SDT. Each subsection covers the mathematical formulation, design rationale, and implementation details of one component.

\subsection{Framework Overview}\label{sec:method_overview}
i-SDT combines physics-informed DT (TCN with conservation constraints modeling nominal dynamics), ADE (bidirectional GRU with MMD regularization discriminating $\mathcal{N}$, $\mathcal{A}_S$, $\mathcal{A}_M$ from residuals), and resilient MPC (uncertainty-aware control during attacks). Offline training uses historical normal and simulated attack data; online deployment performs real-time monitoring and adaptive mitigation. Algorithm~\ref{alg:isdt_pipeline} presents the full pipeline from sensor measurements to prediction, classification, and control.

\begin{algorithm}[H]
\scriptsize  
\caption{i-SDT Online Pipeline}
\label{alg:isdt_pipeline}
\begin{algorithmic}[1]
\Require Trained models $\mathcal{T}_{\text{DT}}, \mathcal{F}_{\text{ADE}}$, threshold $\gamma$, window $W$
\Ensure Real-time classification $\mathcal{C}(t)$ and control $\mathbf{u}(t)$
\For{each timestep $t$}
    \State $\hat{\mathbf{Y}}(t) \gets \mathcal{T}_{\text{DT}}(\mathbf{X}_{\tau}(t))$; $\mathbf{R}(t) \gets \mathbf{Y}_{\text{measured}}(t) - \hat{\mathbf{Y}}(t)$
    \State Collect $\mathbf{R}_W(t)$; \textbf{if} $|\mathbf{R}_W(t)| = W$: $[\mathcal{C}(t), p] \gets \mathcal{F}_{\text{ADE}}(\mathbf{R}_W(t))$; \textbf{else}: $\mathcal{C}(t) \gets \mathcal{N}$, $p \gets 0$
\If{$\mathcal{C}(t) \in \{\mathcal{A}_S, \mathcal{A}_M\}$ \textbf{and} $p > \gamma$}
        \State $\mathbf{u}(t) \gets \text{ResilientMPC}(\hat{\mathbf{Y}}, \Sigma_{\hat{Y}}, \mathbf{Y}_{\text{safe}})$ \Comment{Alg.~\ref{alg:resilient_control}}
    \Else
        \State $\mathbf{u}(t) \gets \mathcal{C}_{\text{nom}}(\mathbf{Y}_{\text{measured}})$
    \EndIf
\EndFor
\end{algorithmic}
\end{algorithm}

\subsection{Physics-Guided DT}\label{sec:method_dt}

To visualize the information flow and variable dependencies within i-SDT, the comprehensive framework architecture is presented in Fig.~\ref{fig:architecture}.

\begin{center}
    \includegraphics[width=0.45\textwidth]{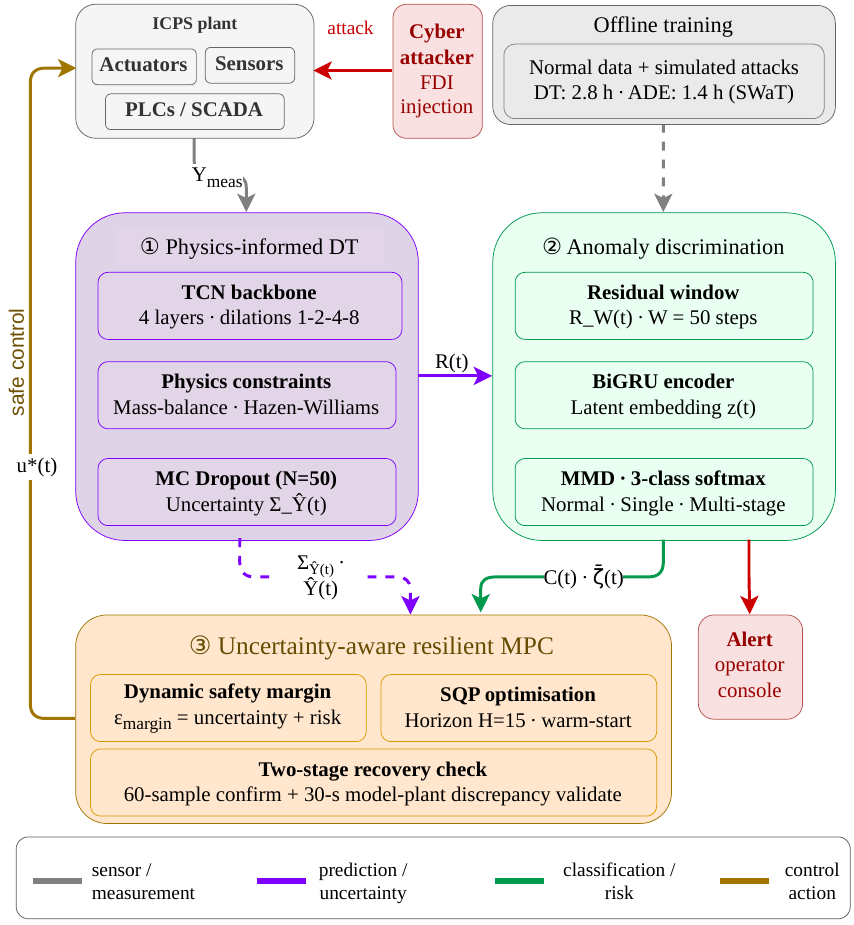}
    
    \captionof{figure}{The proposed i-SDT closed-loop architecture. (1) The PI-DT module leverages a TCN backbone constrained by mass-balance and hydraulic laws to predict normal states and quantify epistemic uncertainty via MC Dropout. (2) The ADE module isolates the spatial-temporal residual distortions using MMD to output risk metrics ($\zeta$). (3) The resilient MPC dynamically tightens safety margins using real-time uncertainty and risk profiles to compute safe operational commands ($u^*(t)$) without requiring system shutdown.}
    
    \label{fig:architecture}
\end{center}

The DT approximates nominal process dynamics by combining temporal deep learning with explicit physics constraints. Given a historical context window $\mathbf{X}_{\tau}(t) = [\mathbf{X}(t-\tau), \ldots, \mathbf{X}(t-1)] \in \mathbb{R}^{\tau \times d_x}$, the twin generates the future state prediction via a Temporal Convolutional Network (TCN):
\begin{equation}\label{eq:method_dt_forward}
    \hat{\mathbf{Y}}(t) = \mathcal{T}_{\text{DT}}(\mathbf{X}_{\tau}(t); \boldsymbol{\Theta}_{\text{DT}}) = \text{FC}\big(\text{TCN}(\mathbf{X}_{\tau}(t))\big),
\end{equation}
where the TCN backbone employs $L=4$ stacked residual blocks with dilation factors $\{1, 2, 4, 8\}$ to capture both short-term transients and long-range dependencies (see Fig.~\ref{fig:architecture}).

Training minimizes a composite objective that dynamically balances empirical accuracy with physical consistency. Unlike static weighting schemes, we employ an uncertainty-aware mechanism:

\begin{equation}\label{eq:method_dt_loss_bayesian}
\begin{aligned}
\mathcal{L}_{\text{DT}}(\boldsymbol{\Theta}_{\text{DT}})
&= \frac{1}{T}\sum_{t=1}^T
\Bigg\{
\frac{\|\mathbf{Y}_{\text{measured}}(t)-\hat{\mathbf{Y}}(t)\|_2^2}
{\sigma_{\text{data}}^2} \\
&\quad + \lambda_{\text{phys}}(t)\,
\mathcal{L}_{\text{phys}}(\hat{\mathbf{Y}})
\Bigg\}
\end{aligned}
\end{equation}

where the adaptive physics weight $\lambda_{\text{phys}}(t)$ is formulated to scale with the model's epistemic uncertainty:

\begin{equation}\label{eq:dynamic_weight}
\lambda_{\text{phys}}(t) = \alpha_0 \cdot \left( 1 + \tanh\left( 
\frac{\text{tr}(\Sigma_{\hat{Y}}(t))}{\theta_{\text{thresh}}} \right) \right).
\end{equation}

where $\theta_{\text{thresh}} > 0$ is a normalisation constant set to the median trace of $\Sigma_{\hat{Y}}$ over the training set, ensuring the $\tanh$ argument operates near its linear regime under nominal conditions.

 \paragraph{Rationale for Adaptive Weighting}
\label{sec:uncertainty_quantification}

Equation~\eqref{eq:dynamic_weight} addresses the ``trust'' dilemma in hybrid modelling. When the TCN encounters out-of-distribution samples (e.g., during attacks or rare operational states), the predictive uncertainty $\mathrm{tr}(\Sigma_{\hat{Y}}(t))$ increases, causing the $\tanh$ term to saturate towards~1 and increasing $\lambda_{\text{phys}}(t)$. This forces the optimisation to prioritise conservation laws (Eqs.~\ref{eq:physics_penalty_mass} and \ref{eq:physics_penalty_pressure}) over potentially corrupted sensor data, keeping the DT physically plausible when data reliability drops.

\paragraph{Scope of Physics Constraints}
\label{sec:physics_scope}

The physical regularisation term $\mathcal{L}_{\text{phys}}$ enforces a mass-balance and pressure constraint on the \textit{hydraulic subsystem} of the plant (involving flow, level, and pressure sensors). We intentionally restrict our framework to be \textit{hydraulically-constrained} rather than attempting to model the complete multi-physics or chemical state of the system. 

Specifically, our threat model strictly focuses on \textit{Cyber-Hydraulic Attacks}. Hydraulic variables exhibit fast dynamics with sub-minute time constants. Anomalies here cause immediate, catastrophic structural damage (e.g., pump dry-running or pipe rupture), explicitly requiring autonomous, sub-second MPC mitigation. Conversely, cyber-chemical attacks manipulating pH or chlorine involve slow-moving thermodynamic processes. Defending against such advanced persistent threats requires complex stoichiometric digital twins, falling outside our scope. This targeted hydraulic-constraint approach provides two benefits: it ensures critical structural stability against fast-acting attacks, and it minimizes the computational overhead for real-time plant-level execution. We note that actuator-level attacks such as direct PLC command injection fall outside the sensor-FDI threat model considered here; extending i-SDT to actuator compromise is identified as a priority for future work. Consequently, the differentiable constraint acts as a strategic regulariser, preventing the TCN from learning hydraulically impossible transitions without the prohibitive burden of full-plant simulation.

For systems where additional physical laws are available in differentiable form (e.g., Kirchhoff's laws for power grids, momentum balance for pipeline networks), the physics loss can be extended by adding corresponding terms to the total physics loss ($\mathcal{L}_{\text{phys}}$) without modifying the rest of the i-SDT architecture. 

Crucially, to preserve gradient flow during backpropagation, the temporal derivative is implemented as a differentiable linear operator:

\begin{equation}\label{eq:diff_operator}
\frac{d\hat{Y}_{\text{level}}(t)}{dt} \approx \frac{\hat{Y}_{\text{level}}(t) - \hat{Y}_{\text{level}}(t-1)}{\Delta t}.
\end{equation}

Since this operation is equivalent to a fixed-kernel convolution, it allows the gradients of the physics loss to be computed via automatic differentiation (Autograd) with respect to the TCN parameters $\boldsymbol{\Theta}_{\text{DT}}$. The mass balance residual is thus defined as:

\begin{equation}\label{eq:physics_penalty_mass}
\mathcal{L}_{\text{mass}}(\hat{\mathbf{Y}}) = \frac{1}{T}\sum_{t=1}^T 
\left\| \sum_{i \in \mathcal{I}_{\text{in}}} \hat{Y}_i(t) - 
\sum_{j \in \mathcal{I}_{\text{out}}} \hat{Y}_j(t) - 
V_{\text{tank}} \frac{d\hat{Y}_{\text{level}}}{dt} \right\|^2,
\end{equation}

To provide a rigorous physical grounding beyond simple volume conservation, we also incorporate the Hazen-Williams empirical equation for fluid dynamics, which governs the pressure drop $\Delta P$ across the primary piping networks:

\begin{equation}\label{eq:physics_penalty_pressure}
\mathcal{L}_{\text{pipe}}(\hat{\mathbf{Y}}) = \frac{1}{T}\sum_{t=1}^T 
\left\| \Delta \hat{P}(t) - 10.67 \cdot L \cdot \frac{\hat{Y}_{\text{flow}}(t)^{1.852}}{C^{1.852} \cdot D^{4.87}} \right\|^2,
\end{equation}

where $L$ (m) is the pipe length, $D$ (m) is the internal diameter, $\hat{Y}_{\text{flow}}$ is the volumetric flow rate ($m^3/s$), and $C$ is the dimensionless Hazen-Williams roughness coefficient specific to the SWaT testbed parameters \citep{Mathur2016SWaT}. The empirical constant 10.67 is rigorously calibrated for these SI units to ensure physical consistency in the loss function $\mathcal{L}_{\text{pipe}}$. The total physics regularization term is defined as the weighted sum $\mathcal{L}_{\text{phys}} = \mathcal{L}_{\text{mass}} + \lambda_p \mathcal{L}_{\text{pipe}}$. By backpropagating through these non-linear physical operators, the TCN explicitly learns the constrained manifold of valid hydraulic state transitions.

\subsection{Temporal Residual Encoding and Multi-Class Discrimination}\label{sec:method_residual}

ADE analyzes temporal patterns in residuals to differentiate normal variations from attacks. We compute $\mathbf{R}(t) = \mathbf{Y}_{\text{measured}}(t) - \hat{\mathbf{Y}}(t)$ and aggregate into windows:
\begin{equation}\label{eq:residual_window}
    \mathbf{R}_W(t) = [\mathbf{R}(t-W+1), \ldots, \mathbf{R}(t)] \in \mathbb{R}^{W \times d_y},
\end{equation}
with $W = 50$ capturing transient and drift patterns of stealthy attacks (Eq.~\ref{eq:residual_window}), consistent with the temporal residual analysis configured in our discrimination engine.

A bidirectional GRU network processes these temporal windows to extract discriminative features:
\begin{align}
    \mathbf{h}_{\text{fwd}}(t), \mathbf{h}_{\text{bwd}}(t) 
        &= \text{BiGRU}(\mathbf{R}_W(t); \boldsymbol{\Theta}_{\text{GRU}}),
        \label{eq:bigru_encoding} \\
    \mathbf{z}(t) 
        &= [\mathbf{h}_{\text{fwd}}(t); \mathbf{h}_{\text{bwd}}(t)] 
           \in \mathbb{R}^{2d_h},
        \label{eq:latent_concat}
\end{align}
Equations~\eqref{eq:bigru_encoding}--\eqref{eq:latent_concat} encode the residual window into a fixed-size latent vector $\mathbf{z}(t) \in \mathbb{R}^{2d_h}$ that captures asymmetric temporal patterns in both the causal and anti-causal directions of residual evolution.

The classification into our three-class taxonomy employs a softmax layer:
\begin{equation}\label{eq:method_softmax}
    P(\mathcal{C}(t) = c \mid \mathbf{z}(t)) = \frac{\exp(\mathbf{w}_c^\top \mathbf{z}(t) + b_c)}{\sum_{c' \in \{\mathcal{N}, \mathcal{A}_S, \mathcal{A}_M\}} \exp(\mathbf{w}_{c'}^\top \mathbf{z}(t) + b_{c'})}
\end{equation}

To enhance the model's ability to distinguish between single-stage and multi-stage attacks, we augment the standard cross-entropy loss with Maximum Mean Discrepancy regularization:
\begin{equation}\label{eq:method_discrim_loss}
\begin{aligned}
    \mathcal{L}_{\text{ADE}} 
    &= -\frac{1}{T}\sum_{t=1}^T 
       \log P\!\left(\mathcal{C}(t) = c_t \mid \mathbf{z}(t)\right) \\
    &\quad +\; \beta \cdot 
       \text{MMD}^2\!\left(\mathcal{Z}_{\mathcal{A}_S},\, \mathcal{Z}_{\mathcal{A}_M}\right).
\end{aligned}
\end{equation}

where the MMD term measures distributional divergence between attack classes using a Gaussian RBF kernel:
\begin{equation}\label{eq:mmd}
    \text{MMD}^2(\mathcal{Z}_1, \mathcal{Z}_2) = \mathbb{E}[k(\mathbf{z}_i, \mathbf{z}_j)] - 2\mathbb{E}[k(\mathbf{z}_i, \mathbf{z}_k)] + \mathbb{E}[k(\mathbf{z}_m, \mathbf{z}_n)]
\end{equation}

MMD advantages: (1) no hard negative mining unlike triplet loss; (2) captures higher-order statistics beyond covariance; (3) naturally handles class imbalance. Ablation with $\beta \in [0, 0.20]$ shows $\beta=0.1$ optimal. Our ablation with $\beta \in [0, 0.20]$ (Table~\ref{tab:ablation}) shows $\beta=0.1$ gives optimal separation.

This regularization improves contrastive learning, which clusters samples tightly and can hide class variation. MMD captures higher-order differences between attack distributions, keeping them distinct. Training process in Algorithm~\ref{alg:ade_training}.

\begin{algorithm}[H]
\scriptsize
\caption{ADE Training with MMD Regularization}
\label{alg:ade_training}
\begin{algorithmic}[1]
\Require Labeled residual dataset $\mathcal{D}_{\text{ADE}} = \{(\mathbf{R}_W^{(i)}, c^{(i)})\}_{i=1}^M$ with $c^{(i)} \in \{\mathcal{N}, \mathcal{A}_S, \mathcal{A}_M\}$
\Require Hyperparameters: learning rate $\eta$, MMD weight $\beta$, batch size $B$, epochs $E$, RBF kernel bandwidth $\sigma_k$ (default: median heuristic)
\Ensure Trained ADE parameters $\boldsymbol{\Theta}_{\text{ADE}}^*$
\State Initialize $\boldsymbol{\Theta}_{\text{ADE}} = \{\boldsymbol{\Theta}_{\text{GRU}}, \mathbf{W}_{\text{cls}}, \mathbf{b}_{\text{cls}}\}$ randomly
\For{epoch $e = 1$ to $E$}
    \State Shuffle $\mathcal{D}_{\text{ADE}}$
    \For{each mini-batch $\mathcal{B}$ of size $B$}
        \State Forward pass: $\mathbf{z}^{(i)} \gets \text{BiGRU}(\mathbf{R}_W^{(i)}; \boldsymbol{\Theta}_{\text{GRU}})$ for all $i \in \mathcal{B}$
        \State Compute class probabilities: $P(\mathcal{C}^{(i)} \mid \mathbf{z}^{(i)})$ using Eq.~\eqref{eq:method_softmax}
        \State Cross-entropy loss: $\mathcal{L}_{\text{CE}} \gets -\frac{1}{B}\sum_{i \in \mathcal{B}} \log P(\mathcal{C}^{(i)} = c^{(i)} \mid \mathbf{z}^{(i)})$
        \State Partition batch: $\mathcal{Z}_{\mathcal{A}_S} \gets \{\mathbf{z}^{(i)} : c^{(i)} = \mathcal{A}_S\}$, $\mathcal{Z}_{\mathcal{A}_M} \gets \{\mathbf{z}^{(i)} : c^{(i)} = \mathcal{A}_M\}$
        \If{$|\mathcal{Z}_{\mathcal{A}_S}| > 0$ \textbf{and} $|\mathcal{Z}_{\mathcal{A}_M}| > 0$}
            \State MMD regularization: $\mathcal{L}_{\text{MMD}} \gets \text{MMD}^2(\mathcal{Z}_{\mathcal{A}_S}, \mathcal{Z}_{\mathcal{A}_M})$ using Eq.~\eqref{eq:mmd}
        \Else
            \State $\mathcal{L}_{\text{MMD}} \gets 0$
        \EndIf
        \State Total loss: $\mathcal{L} \gets \mathcal{L}_{\text{CE}} + \beta \cdot \mathcal{L}_{\text{MMD}}$
        \State Update: $\boldsymbol{\Theta}_{\text{ADE}} \gets \boldsymbol{\Theta}_{\text{ADE}} - \eta \cdot \nabla_{\boldsymbol{\Theta}_{\text{ADE}}} \mathcal{L}$
    \EndFor
\EndFor
\Return $\boldsymbol{\Theta}_{\text{ADE}}^*$
\end{algorithmic}
\end{algorithm}

A key challenge is the limited availability of labelled attack data. To address this, we use simulation and transfer learning: the encoder is trained on simulated attacks, after which the normal-class boundary is fine-tuned using 7--14 days of real operational data with the attack layers frozen, achieving $>90\%$ of fully supervised performance. The classification procedure is detailed in Algorithm~\ref{alg:ade_classify}.

\subsection{Probabilistic Resilient Control with Formal Safety Guarantees}
\label{sec:method_control}

Upon confirming an attack via ADE (Algorithm~\ref{alg:ade_classify}), i-SDT transitions to resilient control mode. We first establish the theoretical foundations of the proposed MPC formulation before presenting the optimisation problem.

\subsubsection{Assumptions}
\label{sec:mpc_assumptions}

The following assumptions underpin the stability and feasibility analysis.

\begin{assumption}[Bounded Model-Plant Mismatch]\label{ass:mismatch}
The prediction error of the DT is bounded almost surely:

$\|\mathbf{Y}_{\text{measured}}(t) - \hat{\mathbf{Y}}(t)\|_2 \leq
\bar{e}$ for all $t$, where $\bar{e} > 0$ is estimated empirically
from the validation set.
\end{assumption}

\begin{assumption}[Lipschitz DT Dynamics]\label{ass:lipschitz}
The learned mapping $\mathcal{T}_{\text{DT}}$ is Lipschitz continuous
with constant $L_f$, i.e.,
$\|\mathcal{T}_{\text{DT}}(\mathbf{x}_1) -
 \mathcal{T}_{\text{DT}}(\mathbf{x}_2)\|_2 \leq
 L_f \|\mathbf{x}_1 - \mathbf{x}_2\|_2$.
\end{assumption}

\begin{assumption}[Compact Safe Set]\label{ass:safeset}
The safe operating region
$\mathcal{Y}_{\text{safe}} = \{\mathbf{Y} :
\mathbf{Y}_{\min} \leq \mathbf{Y} \leq \mathbf{Y}_{\max}\}$
is a compact convex polytope, and the tightened set
$\mathcal{Y}_{\text{tight}}(k) = \{\mathbf{Y} :
\mathbf{Y}_{\min} + \boldsymbol{\epsilon}_{\text{margin}}(k) \leq
\mathbf{Y} \leq
\mathbf{Y}_{\max} - \boldsymbol{\epsilon}_{\text{margin}}(k)\}$
is non-empty for all $k$.
\end{assumption}

\subsubsection{Successive Linearisation of the Learned DT}
\label{sec:linearisation}

Since $\mathcal{T}_{\text{DT}}$ is a highly non-linear network and industrial processes contain discrete operational shifts (e.g., pump switching), a single steady-state Jacobian is invalid under dynamic attack conditions. Instead, we employ Successive Linearisation (SL-MPC). At each control step $t$, the Jacobian matrices are dynamically evaluated around the current predicted state $\hat{\mathbf{Y}}(t)$ and the previous applied control input $\mathbf{u}(t-1)$:

\begin{equation}\label{eq:jacobian}
    A(t) = \frac{\partial\,\mathcal{T}_{\text{DT}}}{\partial\,\mathbf{Y}}\bigg|_{\hat{\mathbf{Y}}(t),\mathbf{u}(t-1)}, \quad B(t) = \frac{\partial\,\mathcal{T}_{\text{DT}}}{\partial\,\mathbf{u}}\bigg|_{\hat{\mathbf{Y}}(t),\mathbf{u}(t-1)}
\end{equation}

computed efficiently via automatic differentiation through the TCN computational graph. To strictly bind the Lipschitz constant $L_f$ (satisfying Assumption~\ref{ass:lipschitz}) and prevent unbounded Jacobian spectral norms during out-of-distribution adversarial transitions, we apply Spectral Normalization to the convolutional weights during training. The empirical Lipschitz estimate, computed via power iteration on the validation set, yields $L_f \leq 2.31$ for SWaT and $L_f \leq 2.74$ for WADI. Consequently, the linearisation residual reliably satisfies $\|r\|_2 \leq \bar{e}$ in the local neighborhood of the current trajectory, ensuring Assumption~\ref{ass:mismatch} holds during transient recovery phases.

\subsubsection{MPC Optimisation Problem}
\label{sec:mpc_opt}

The controller solves the following finite-horizon problem at each timestep $t$:

\begin{equation}\label{eq:method_mpc_objective}
\begin{aligned}
\min_{\{\mathbf{u}(k)\}_{k=t}^{t+H-1}} \quad
& \sum_{k=t}^{t+H-1}
\left(
  \|\hat{\mathbf{Y}}(k{+}1) - \mathbf{Y}_{\text{safe}}\|_Q^2
  + \|\Delta\mathbf{u}(k)\|_R^2
\right) \\
& + \|\hat{\mathbf{Y}}(t{+}H) - \mathbf{Y}_{\text{safe}}\|_P^2
\end{aligned}
\end{equation}

subject to:
\begin{align}
    \hat{\mathbf{Y}}(k{+}1)
        &= \mathcal{T}_{\text{DT}}
           \!\left(\mathbf{X}_\tau(k),\,\mathbf{u}(k)\right),
        \label{eq:mpc_dynamics} \\
    \mathbf{u}_{\min}
        &\leq \mathbf{u}(k) \leq \mathbf{u}_{\max},
        \label{eq:mpc_actuator} \\
    \mathbf{Y}_{\min} + \boldsymbol{\epsilon}_{\text{margin}}(k)
        &\leq \hat{\mathbf{Y}}(k) \leq
         \mathbf{Y}_{\max} - \boldsymbol{\epsilon}_{\text{margin}}(k),
        \label{eq:mpc_safety_robust} \\
    \hat{\mathbf{Y}}(t{+}H)
        &\in \mathcal{Y}_f.
        \label{eq:terminal_constraint}
\end{align}

The terminal set $\mathcal{Y}_f \subseteq \mathcal{Y}_{\text{safe}}$ in constraint~\eqref{eq:terminal_constraint} is the maximal positively invariant set of the nominal closed-loop system under the LQR gain $K$, computed offline via standard invariant-set algorithms~\citep{Rawlings2020MPC}. The terminal cost matrix $P \succ 0$ solves the discrete-time Lyapunov equation $A_K^\top P A_K - P = -Q - K^\top R K$, where $A_K = A - BK$ is the closed-loop matrix under the unconstrained LQR gain $K$. This choice guarantees that the terminal cost is a Lyapunov function for the nominal unconstrained system.

\subsubsection{Recursive Feasibility}
\label{sec:feasibility}

\begin{proposition}[Recursive Feasibility]\label{prop:feasibility}
Under Assumptions~\ref{ass:mismatch}--\ref{ass:safeset}, suppose the MPC problem~\eqref{eq:method_mpc_objective} is feasible at time $t$. If the safety margin satisfies

\begin{equation}\label{eq:margin_condition}
    \boldsymbol{\epsilon}_{\text{margin}}(k) \;\geq\;
    \kappa\,\mathrm{diag}\!\left(\Sigma_{\hat{Y}}(k)^{1/2}\right)
    + \bar{e}\,\mathbf{1},
    \quad \forall\, k \in [t,\, t{+}H{-}1],
\end{equation}

then the MPC problem remains feasible at time $t{+}1$.
\end{proposition}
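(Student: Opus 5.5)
The argument is the standard shifted-sequence construction from tube-based robust MPC (cf.~\citep{Rawlings2020MPC}), adapted to the learned DT. Let $\{\mathbf{u}^*(k|t)\}_{k=t}^{t+H-1}$ be the optimal sequence at time $t$, with predicted outputs $\hat{\mathbf{Y}}^*(k|t)$. At time $t{+}1$ we propose the candidate
\[
\tilde{\mathbf{u}}(k|t{+}1)=\mathbf{u}^*(k|t),\quad k=t{+}1,\ldots,t{+}H{-}1,
\]
\[
\tilde{\mathbf{u}}(t{+}H|t{+}1)=-K\bigl(\hat{\mathbf{Y}}^*(t{+}H|t)-\mathbf{Y}_{\text{safe}}\bigr)+\mathbf{u}_{\text{ss}}.
\]
This is the shifted tail of the old plan, completed by the terminal LQR law. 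Actuator feasibility~\eqref{eq:mpc_actuator} is inherited for the tail. For the appended input, we use that $\mathcal{Y}_f$ is the maximal positively invariant set under $K$, which we take to be constructed inside the region where the LQR input meets the actuator limits (the standard admissibility requirement of invariant-set algorithms).

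The core step is to bound the discrepancy between the new predictions $\tilde{\mathbf{Y}}(k|t{+}1)$, launched from the freshly measured context, and the old predictions $\hat{\mathbf{Y}}^*(k|t)$.
\begin{enumerate}
\item The initial error is at most $\bar{e}$ by Assumption~\ref{ass:mismatch}, since the realised measurement differs from the one-step prediction by at most $\bar{e}$.
\item This error propagates through $\mathcal{T}_{\text{DT}}$ with gain $L_f$ by Assumption~\ref{ass:lipschitz}, or, within the successive linearisation~\eqref{eq:jacobian}, through $A(k)$ with a residual that stays below $\bar{e}$.
\item The stochastic component of the true output around $\hat{\mathbf{Y}}$ is covered componentwise at confidence $\kappa$ by $\kappa\,\mathrm{diag}(\Sigma_{\hat{Y}}^{1/2})$.
\end{enumerate}
Hence each component of $\tilde{\mathbf{Y}}(k|t{+}1)-\hat{\mathbf{Y}}^*(k|t)$ lies within $\kappa\,\mathrm{diag}(\Sigma_{\hat{Y}}(k)^{1/2})+\bar{e}\mathbf{1}$. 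Condition~\eqref{eq:margin_condition} then lets the margin absorb this deviation: because $\hat{\mathbf{Y}}^*(k|t)$ satisfied~\eqref{eq:mpc_safety_robust}, $\tilde{\mathbf{Y}}(k|t{+}1)$ stays in $\mathcal{Y}_{\text{safe}}$. To recover the tightened constraint at time $t{+}1$, we use the nesting argument of constraint tightening: the margin at horizon index $k$ under the new plan corresponds to index $k{+}1$ under the old one. We therefore take $\boldsymbol{\epsilon}_{\text{margin}}$ non-increasing along the shifted index, or equivalently interpret~\eqref{eq:margin_condition} as guaranteeing the one-step tightening gap.

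For the terminal condition, $\hat{\mathbf{Y}}^*(t{+}H|t)\in\mathcal{Y}_f$, and positive invariance under $K$ gives that the appended prediction lies in $\mathcal{Y}_f\subseteq\mathcal{Y}_{\text{safe}}$. Assumption~\ref{ass:safeset} guarantees that the tightened sets are non-empty, so $\mathcal{Y}_f$ can be chosen inside $\mathcal{Y}_{\text{tight}}$. The candidate therefore satisfies~\eqref{eq:mpc_dynamics}--\eqref{eq:terminal_constraint}, and the problem at $t{+}1$ is feasible.

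The main obstacle is the error-propagation step. Lipschitz amplification by $L_f>1$ over several steps would naively require a margin growing like $L_f^{j}\bar{e}$, while~\eqref{eq:margin_condition} only provides $\bar{e}$. We plan to handle this by using the one-step-ahead structure: the mismatch bound is applied to the freshly re-anchored prediction at each step, and the accumulated propagation is treated as already included in $\Sigma_{\hat{Y}}$ through the MC-dropout rollout. If that does not close, the fallback is to state the needed margin as $\bar{e}\sum_{j}L_f^{j}$ and note that~\eqref{eq:margin_condition} suffices when the linearised closed loop is contractive. The terminal invariance step is routine.
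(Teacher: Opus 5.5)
There is a genuine gap, at the step you flagged yourself, and it cannot be closed from the stated hypotheses. Your skeleton (shifted tail plus one appended admissible input) matches the paper's. Up to the conclusion $\tilde{\mathbf{Y}}(k|t{+}1)\in\mathcal{Y}_{\text{safe}}$ your argument also coincides with the paper's proof, which stops there and infers feasibility from non-emptiness of $\mathcal{Y}_{\text{tight}}(k)$. Feasibility at $t{+}1$, however, requires the new predictions to satisfy the \emph{recomputed} tightened constraint~\eqref{eq:mpc_safety_robust}. Componentwise, with superscripts marking when each margin was computed, this needs
\[
\boldsymbol{\epsilon}^{(t)}_{\text{margin}}(k)-\boldsymbol{\epsilon}^{(t+1)}_{\text{margin}}(k)\;\geq\;\bigl|\tilde{\mathbf{Y}}(k|t{+}1)-\hat{\mathbf{Y}}^*(k|t)\bigr|.
\]
Condition~\eqref{eq:margin_condition} only lower-bounds the margins at time $t$ and says nothing about this difference. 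Taking the margins ``non-increasing'' would not suffice, since you need a decrease at least as large as the deviation. Reading~\eqref{eq:margin_condition} as a one-step tightening gap changes the statement. Worse, under the paper's own margin law~\eqref{eq:zeta_horizon}--\eqref{eq:uncertainty_margin_total}, the decaying risk profile is re-anchored at every step. Even with $\bar{\zeta}$ constant, the risk back-off at a fixed absolute time $k$ grows by the factor $e^{\lambda_d}$ when recomputed at $t{+}1$ (the floor $\zeta_{\min}$ is inactive for $H=15$, $\lambda_d=0.1$). So the required nesting runs the wrong way for that term. Nothing in the hypotheses prevents an old prediction on the boundary of $\mathcal{Y}_{\text{tight}}(k)$ from violating the new constraint even with zero model error.

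The deviation bound is also unavailable. The $\kappa\,\mathrm{diag}(\Sigma_{\hat{Y}}^{1/2})$ term is a $95\%$ Gaussian quantile of MC-dropout spread, so using it as a componentwise bound turns the deterministic claim into a probabilistic one. It also measures parameter-induced variability, not sensitivity to the re-anchored context, so absorbing the $L_f^{j}\bar{e}$ growth into it is unjustified. With the paper's $L_f\le 2.31$ and $H=15$, the crude bound $L_f^{H-1}\bar{e}$ exceeds $10^5\bar{e}$. Your contractivity fallback has its own problem: the shifted tail applies $\mathbf{u}^*(k|t)$ open loop, so the deviation propagates through the open-loop DT Jacobians, not through $A-BK$. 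Even with contraction, the nesting inequality above is still required.

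The terminal step is not routine either. At $t{+}1$ the trajectory reaches time $t{+}H$ at $\tilde{\mathbf{Y}}(t{+}H|t{+}1)$, not at $\hat{\mathbf{Y}}^*(t{+}H|t)\in\mathcal{Y}_f$, so nominal positive invariance says nothing. You need $\mathcal{Y}_f$ to be robustly invariant with respect to the deviation set, and invariant for $\mathcal{T}_{\text{DT}}$ rather than for one linearisation. You also need $\mathcal{Y}_f\subseteq\mathcal{Y}_{\text{tight}}$ built into its construction. This does not follow from non-emptiness, and the paper defines $\mathcal{Y}_f$ as the maximal invariant set inside $\mathcal{Y}_{\text{safe}}$.

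What is missing is the standard constraint-tightening design:
\begin{itemize}
\item a feedback-corrected candidate $\mathbf{u}^*(k|t)-K\bigl(\tilde{\mathbf{Y}}(k|t{+}1)-\hat{\mathbf{Y}}^*(k|t)\bigr)$;
\item margins that increase along the horizon by at least the propagated error bound;
\item correspondingly tightened input constraints;
\item a robustly invariant terminal set inside the last tightened set.
\end{itemize}
None of these is among the hypotheses, and the paper's proof does not supply them either. Your route therefore proves the proposition only under such strengthened assumptions.
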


\begin{proof}
Let $\mathbf{U}^*(t) = \{\mathbf{u}^*(t), \ldots, \mathbf{u}^*(t{+}H{-}1)\}$ be the optimal solution at time $t$. Construct the candidate solution at $t{+}1$ as the shifted sequence:
\begin{equation}\label{eq:candidate_solution}
    \tilde{\mathbf{U}}(t{+}1) =
    \{\mathbf{u}^*(t{+}1), \ldots, \mathbf{u}^*(t{+}H{-}1),
    \mathbf{u}_{\text{safe}}\},
\end{equation}
where $\mathbf{u}_{\text{safe}}$ in~\eqref{eq:candidate_solution} drives the predicted state toward $\mathbf{Y}_{\text{safe}}$ and satisfies $\mathbf{u}_{\min} \leq \mathbf{u}_{\text{safe}} \leq \mathbf{u}_{\max}$.

For any $k \in [t{+}1,\, t{+}H{-}1]$, the prediction error at $t{+}1$ is bounded by Assumption~\ref{ass:mismatch}:

\begin{equation}\label{eq:pred_error_bound}
    \|\mathbf{Y}_{\text{measured}}(k) - \hat{\mathbf{Y}}(k)\|_2
    \leq \bar{e}.
\end{equation}

Since $\boldsymbol{\epsilon}_{\text{margin}}(k) \geq \bar{e}\,\mathbf{1}$ by~\eqref{eq:margin_condition} and~\eqref{eq:pred_error_bound}, the tightened constraint in~\eqref{eq:mpc_safety_robust} absorbs this error, so $\hat{\mathbf{Y}}(k) \in \mathcal{Y}_{\text{tight}}(k)$ implies $\mathbf{Y}_{\text{measured}}(k) \in \mathcal{Y}_{\text{safe}}$. By Assumption~\ref{ass:safeset}, $\mathcal{Y}_{\text{tight}}(k)$ is non-empty, so $\tilde{\mathbf{U}}(t{+}1)$ is feasible. Therefore, the MPC problem is feasible at $t{+}1$. \qed
\end{proof}

\subsubsection{Input-to-State Stability Under Attack}
\label{sec:iss}

\begin{theorem}[ISS of the Closed-Loop System]\label{thm:iss}
Under Assumptions~\ref{ass:mismatch}--\ref{ass:safeset} and Proposition~\ref{prop:feasibility}, the closed-loop system under the i-SDT MPC is Input-to-State Stable with respect to the disturbance $\boldsymbol{\delta}_{\mathcal{A}}(t)$. Specifically, there exists a class-$\mathcal{KL}$ function $\beta$ and a class-$\mathcal{K}$ function $\gamma$ such that:

\begin{equation}\label{eq:iss}
    \|\mathbf{Y}(t) - \mathbf{Y}_{\text{safe}}\|_2 \;\leq\;
    \beta\!\left(\|\mathbf{Y}(0) - \mathbf{Y}_{\text{safe}}\|_2,\, t\right)
    + \gamma\!\left(\sup_{\tau \leq t}
      \|\boldsymbol{\delta}_{\mathcal{A}}(\tau)\|_2\right)
\end{equation}
\end{theorem}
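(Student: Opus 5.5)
The plan is the standard MPC route: use the optimal value function $V_H^*(t)$ of problem~\eqref{eq:method_mpc_objective} as an ISS-Lyapunov function. The argument proceeds in three steps, first for the nominal (attack-free) system and then with $\boldsymbol{\delta}_{\mathcal{A}}$ treated as an additive perturbation.

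\textbf{Step 1: sandwich bounds.} I will write $\mathbf{e}(t)=\mathbf{Y}(t)-\mathbf{Y}_{\text{safe}}$. The lower bound $V_H^*(t)\ge \lambda_{\min}(Q)\|\mathbf{e}(t)\|_2^2$ comes directly from the first stage cost. For the upper bound I will use the terminal ingredients. On $\mathcal{Y}_f$, the LQR policy $\mathbf{u}=-K\mathbf{e}$ is feasible, and the Lyapunov equation $A_K^\top P A_K-P=-Q-K^\top RK$ gives $V_H^*(t)\le \|\mathbf{e}(t)\|_P^2$. Compactness of $\mathcal{Y}_{\text{safe}}$ (Assumption~\ref{ass:safeset}) then extends this to $V_H^*(t)\le \alpha_2(\|\mathbf{e}(t)\|_2)$, with $\alpha_2\in\mathcal{K}_\infty$, on the whole feasible region, which is the standard argument of \citet{Rawlings2020MPC}.

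\textbf{Step 2: descent along the plant.} I will reuse the shifted candidate $\tilde{\mathbf{U}}(t{+}1)$ from the proof of Proposition~\ref{prop:feasibility}, with the tail $\mathbf{u}_{\text{safe}}$ taken to be the LQR action $-K\mathbf{e}$. Nominally, the terminal Lyapunov equation cancels the added stage cost against the terminal-cost increment, which gives $V(\tilde{\mathbf{U}})\le V_H^*(t)-\lambda_{\min}(Q)\|\mathbf{e}(t)\|^2$.

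Under attack, the state actually realised differs from the one predicted at time $t$. The realised measurement deviates by at most $\bar e+\|\boldsymbol{\delta}_{\mathcal{A}}(t)\|$, combining Assumption~\ref{ass:mismatch} with Eq.~\eqref{eq:fdi_attack}. Propagating this through the DT over the horizon, Assumption~\ref{ass:lipschitz} bounds the deviation of each predicted state by $L_f^{j}(\bar e+\|\boldsymbol{\delta}_{\mathcal{A}}\|)$. The quadratic costs are locally Lipschitz on the compact set, so $|V(\tilde{\mathbf{U}})\text{ on the true state}-V(\tilde{\mathbf{U}})\text{ on the nominal prediction}|\le \sigma(\|\boldsymbol{\delta}_{\mathcal{A}}(t)\|)$, where $\sigma\in\mathcal{K}$ with $\sigma(s)=c\sum_{j<H}L_f^{j}\,s$ for a constant $c$ depending on $\|Q\|$, $\|P\|$ and the diameter of $\mathcal{Y}_{\text{safe}}$.

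Feasibility of the perturbed candidate is guaranteed by the margin~\eqref{eq:margin_condition}. This is the place where Proposition~\ref{prop:feasibility} is invoked: the margin absorbs the deviation as long as the attack is bounded by $\beta_{\text{low}}$. Since $V_H^*(t{+}1)\le V(\tilde{\mathbf{U}})$ by optimality, the two pieces combine into
\[
V_H^*(t{+}1)-V_H^*(t)\le -\lambda_{\min}(Q)\|\mathbf{e}(t)\|_2^2+\sigma\bigl(\|\boldsymbol{\delta}_{\mathcal{A}}(t)\|_2\bigr).
\]

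\textbf{Step 3: conclude ISS.} The dissipation inequality in Step 2, together with the sandwich bounds of Step 1, is exactly the discrete-time ISS-Lyapunov condition of Jiang--Wang. Their comparison-lemma argument then produces the functions $\beta\in\mathcal{KL}$ and $\gamma\in\mathcal{K}$ in~\eqref{eq:iss}, with $\gamma$ determined by $\sigma$, $\alpha_2$ and $\lambda_{\min}(Q)$.

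\textbf{Main obstacle.} The hardest step is Step 2, and specifically the constant mismatch $\bar e$. That term does not vanish as $\boldsymbol{\delta}_{\mathcal{A}}\to 0$, so it really gives ISS with an offset (practical ISS) rather than the pure form of~\eqref{eq:iss}. I would first try to fold $\bar e$ into the disturbance, treating the pair $(\boldsymbol{\delta}_{\mathcal{A}},\bar e)$ as the input. If the statement must be kept in its pure form, I would instead argue that the model error vanishes at the equilibrium $\mathbf{Y}_{\text{safe}}$, i.e.\ that it is bounded by $L_f\|\mathbf{e}\|$. That bound can then be dominated by the $\lambda_{\min}(Q)\|\mathbf{e}\|^2$ decrease for small errors, at the cost of a smallness condition relating $L_f$ to $Q$ and $P$.

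A secondary gap is the linearisation in Eq.~\eqref{eq:jacobian}. The matrices $(A,B)$ and hence $P$ change with time, so I would need either a common Lyapunov matrix $P$ valid over the range of $(A(t),B(t))$ or slow-variation arguments to make the terminal-cost cancellation in Step 2 hold uniformly.
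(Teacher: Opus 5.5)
There is a genuine gap, and it is the one you flag yourself: the inequality displayed at the end of Step~2 does not follow from what precedes it. Your own bookkeeping gives a one-step perturbation of size $\bar e+\|\boldsymbol{\delta}_{\mathcal{A}}(t)\|_2$. The dissipation inequality you can actually prove therefore carries $\sigma\bigl(\bar e+\|\boldsymbol{\delta}_{\mathcal{A}}(t)\|_2\bigr)$, and the Jiang--Wang argument then yields only
\[
\|\mathbf{Y}(t)-\mathbf{Y}_{\text{safe}}\|_2\le\beta\bigl(\|\mathbf{Y}(0)-\mathbf{Y}_{\text{safe}}\|_2,t\bigr)+\gamma\Bigl(\sup_{\tau\le t}\|\boldsymbol{\delta}_{\mathcal{A}}(\tau)\|_2\Bigr)+\gamma_e(\bar e),
\]
i.e.\ practical ISS. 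This is not an artefact of your method. Take $\boldsymbol{\delta}_{\mathcal{A}}\equiv\mathbf{0}$: a constant one-step plant--twin bias of norm below $\bar e$ is admissible under Assumption~\ref{ass:mismatch}. An MPC with no disturbance model then generically settles at a nonzero offset from $\mathbf{Y}_{\text{safe}}$, whereas \eqref{eq:iss} would force $\mathbf{Y}(t)\to\mathbf{Y}_{\text{safe}}$.

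Neither of your repairs stays within the stated hypotheses. Treating $(\boldsymbol{\delta}_{\mathcal{A}},\bar e)$ as the input gives exactly the weaker bound above, i.e.\ a different theorem. A discrepancy bound of the form $c\|\mathbf{e}\|$ is a new assumption: plant and twin would have to share the equilibrium, with a discrepancy constant small relative to the decrease rate. Assumption~\ref{ass:lipschitz} concerns $\mathcal{T}_{\text{DT}}$ alone and says nothing about the plant-minus-twin error, so $L_f$ is not the relevant constant.

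The paper's argument does not close this gap either. Its sketch uses the different candidate $V(t)=\|\hat{\mathbf{Y}}(t)-\mathbf{Y}_{\text{safe}}\|_P^2$ and simply asserts \eqref{eq:lyapunov_decrease} with $\sigma(\|\boldsymbol{\delta}_{\mathcal{A}}(t)\|_2)$ alone, citing Assumptions~\ref{ass:mismatch}--\ref{ass:lipschitz}. That is precisely the step you declined to take. Between the two candidates, yours is the right one. The Lyapunov equation certifies descent of $\|\cdot\|_P^2$ only under the LQR law $\mathbf{u}=-K\mathbf{e}$, not along the MPC closed loop, whereas descent of $V_H^*$ is what the shifted-candidate argument is built to deliver.

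Two further steps need repair. First, you assume a textbook stage cost $\ell(\mathbf{e}(k),\mathbf{u}(k))$, but \eqref{eq:method_mpc_objective} penalises $\hat{\mathbf{Y}}(k{+}1)$ and $\Delta\mathbf{u}(k)$. So the first stage cost does not dominate $\lambda_{\min}(Q)\|\mathbf{e}(t)\|_2^2$. Moreover, appending $\mathbf{u}_{\text{safe}}=-K\mathbf{e}_H$ adds $\|A_K\mathbf{e}_H\|_Q^2+\|{-K\mathbf{e}_H}-\mathbf{u}^*(t{+}H{-}1)\|_R^2$, which $A_K^\top PA_K-P=-Q-K^\top RK$ does not cancel. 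You would have to augment the state with $\mathbf{u}(k{-}1)$ and the context window $\mathbf{X}_\tau$ (the twin's actual state; $\hat{\mathbf{Y}}$ alone is not one), and design $P,K$ for that system. This comes on top of the time-varying $(A(t),B(t))$ issue you already note.

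Second, \eqref{eq:margin_condition} contains no attack term. Proposition~\ref{prop:feasibility} therefore does not by itself keep your perturbed candidate feasible for attacks up to $\beta_{\text{low}}$. You need either an explicit back-off for the attack bound, or the literal reading of Assumption~\ref{ass:mismatch}. Under that reading, $\bar e$ already bounds the attacked residual, and the attack cannot be separated from the model error at all.
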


\begin{proof}[Proof Sketch]
Define the Lyapunov candidate $V(t) = \|\hat{\mathbf{Y}}(t) - \mathbf{Y}_{\text{safe}}\|_P^2$. By the terminal cost construction (Section~\ref{sec:mpc_opt}), $V(t)$ is non-increasing along nominal (attack-free) trajectories. Under attack with $\|\boldsymbol{\delta}_{\mathcal{A}}(t)\|_2 \leq \beta_{\text{high}}$ (where $\beta_{\text{high}}$ is a finite  upper bound on the attack magnitude, consistent with the ISS  framework~\citep{Rawlings2020MPC}), the perturbation enters additively through the residual $\mathbf{R}(t)$. Using Assumption~\ref{ass:lipschitz} and the bound in Assumption~\ref{ass:mismatch}, the one-step Lyapunov decrease satisfies:

\begin{equation}\label{eq:lyapunov_decrease}
    V(t{+}1) - V(t) \;\leq\;
    -\alpha_3\!\left(\|\mathbf{Y}(t) - \mathbf{Y}_{\text{safe}}\|_2\right)
    + \sigma\!\left(\|\boldsymbol{\delta}_{\mathcal{A}}(t)\|_2\right)
\end{equation}

where $\alpha_3 \in \mathcal{K}_\infty$ and $\sigma \in \mathcal{K}$.
Standard ISS arguments~\citep{Rawlings2020MPC} then yield~\eqref{eq:iss}.
\qed
\end{proof}

\begin{remark}[Chattering Prevention]\label{rem:chattering}

The stealth risk metric $\zeta(t)$ in~\eqref{eq:stealth_risk_zeta} is computed from uncalibrated softmax outputs, which may oscillate rapidly near decision boundaries. To prevent constraint chattering in the MPC, $\zeta(t)$ is passed through an exponential moving average filter before being used in~\eqref{eq:uncertainty_margin_total}:

\begin{equation}\label{eq:ema}
    \bar{\zeta}(t) = (1-\mu)\,\bar{\zeta}(t{-}1) + \mu\,\zeta(t),
    \quad \mu = 0.2,
\end{equation}
ensuring that $\boldsymbol{\epsilon}_{\text{margin}}(k)$ varies smoothly and the SQP solver encounters a stable constraint landscape. The filter coefficient $\mu=0.2$ corresponds to a time constant of approximately $4$ seconds at $1$\,Hz sampling, which is short relative to the hydraulic time constants (minutes) but long enough to suppress softmax noise.
\end{remark}

\subsubsection{Horizon Propagation of Uncertainty and Risk}
\label{sec:horizon_propagation}

The safety margin $\boldsymbol{\epsilon}_{\text{margin}}(k)$ in constraint~\eqref{eq:mpc_safety_robust} depends on two quantities that must be evaluated at each future step $k \in [t,\,t{+}H{-}1]$: the predictive uncertainty $\Sigma_{\hat{Y}}(k)$ and the smoothed risk metric $\bar{\zeta}(t)$. We now specify how each is propagated over the horizon.

\paragraph{Uncertainty Propagation.}
At the current step $t$, the DT produces the epistemic uncertainty bound $\Sigma_{\text{epi}}(t)$ via MC Dropout. To avoid unphysical compounding of model ignorance with inherent process noise, we explicitly initialize the prediction covariance with the epistemic estimate. Over the prediction horizon, the total covariance is propagated using the successive linearisation matrices $A(k)$:

\begin{equation}\label{eq:sigma_propagation}
    \Sigma_{\hat{Y}}(k+1) = A(k)\Sigma_{\hat{Y}}(k)A(k)^\top + \Sigma_\omega, \quad \text{with} \quad \Sigma_{\hat{Y}}(t) = \Sigma_{\text{epi}}(t)
\end{equation}

for $k = t,\ldots,t+H-2$, where $\Sigma_\omega$ is the constant aleatoric process noise covariance. This non-stationary recursion provides a realistic uncertainty envelope over the horizon without catastrophic margin explosion.

\paragraph{Risk Metric Over the Horizon.}
The ADE classifier operates on the current residual window and produces $\zeta(t)$ at step $t$ only. Predicting the exact future evolution of $\zeta$ is impossible without knowing the attacker's strategy. However, adopting a worst-case constant assumption for the entire horizon leads to severe actuator saturation and violates operational continuity. Therefore, we introduce an \textit{exponentially decaying risk profile} tied to the hydraulic settling time ($T_c$) of the process:

\begin{equation}\label{eq:zeta_horizon}
\begin{aligned}
\bar{\zeta}(k) \;=\;
& \max\!\Big(
\bar{\zeta}(t)\,\exp\!\big(-\lambda_d (k - t)\big), \\
& \zeta_{\min}
\Big), \quad \forall\, k \in [t,\,t{+}H{-}1]
\end{aligned}
\end{equation}

where the decay rate $\lambda_d = 1/T_c$ allows the controller to gradually relax the excessive safety margins as it projects further into the future, and $\zeta_{\min} = 0.1 \bar{\zeta}(t)$ ensures a minimum baseline of defensive readiness. This formulation guarantees recursive feasibility while preventing the MPC optimizer from commanding aggressive, self-destructive corrections due to artificially prolonged worst-case assumptions.

\paragraph{Resulting Margin.}
Combining both propagations, the safety margin at step $k$ is:

\begin{equation}\label{eq:uncertainty_margin_total}
\boldsymbol{\epsilon}_{\text{margin}}(k) \;=\;
\underbrace{%
  \kappa\cdot\mathrm{diag}\!\left(\Sigma_{\hat{Y}}(k)^{1/2}\right)
}_{\text{uncertainty back-off}}
\;+\;
\underbrace{%
  \eta\cdot\bar{\zeta}(k)
}_{\text{risk back-off}},
\end{equation}

where $\kappa = 1.96$ provides a 95\% confidence back-off against prediction errors, $\eta = 0.5$ scales the risk contribution (tuned on the validation set to balance conservatism and operational cost), and the risk metric $\bar{\zeta}(k)$ decays over the horizon as defined in Eq.~\eqref{eq:zeta_horizon}. The base stealth risk $\zeta(t)$ is computed as:

\begin{equation}\label{eq:stealth_risk_zeta}
    \zeta(t) = \frac{P(\mathcal{A}_S|t) + \omega \cdot P(\mathcal{A}_M|t)}
                {P(\mathcal{N}|t) + \varepsilon_0},
\end{equation}

with $\omega=2.0$ weighting multi-stage attacks higher due to their complexity, and $\varepsilon_0 = 10^{-6}$ is a numerical stability  constant preventing division by zero. Both $\Sigma_{\hat{Y}}(k)$ and $\bar{\zeta}(t)$ are fully determined before the optimisation is solved, and the margin satisfies condition~\eqref{eq:margin_condition}, so the constraints in~\eqref{eq:mpc_safety_robust} are \emph{explicit} (not implicit in the decision variables), keeping the MPC a standard finite-dimensional quadratic programme.

\subsubsection{Recovery Protocol}
\label{sec:recovery}

Recovery to nominal control mode employs a two-stage verification. The system must satisfy $\mathcal{C}(t) = \mathcal{N}$ for $T_{\text{confirm}} = 60$ consecutive samples. Subsequently, a closed-loop validation test verifies that the model-plant discrepancy $\delta_{\text{MP}}(t)$ (Eq.~\eqref{eq:mismatch}) remains below $\tau_{\text{recovery}} = 0.05$ (5\% relative  error, consistent with the NRMSE validation bound of the  simulator) for $T_{\text{val}} = 30$ seconds. Only upon passing both stages does the controller transition back to nominal mode. Control details in Algorithm~\ref{alg:resilient_control}.

\paragraph{Deterministic Fallback under Assumption Violation:}
Assumption~\ref{ass:mismatch} posits a bounded prediction error $\bar{e}$. However, we acknowledge that sophisticated zero-day anomalies or severe mechanical failures may cause the true plant dynamics to wildly diverge from the DT predictions, violating this bound. To ensure absolute operational safety, the framework incorporates a continuous safety interlock: if the empirical mismatch diverges beyond a critical threshold ($\delta_{\text{MP}}(t) > 3\bar{e}$), the formal guarantees of Theorem~\ref{thm:iss} are considered compromised. In such critical violation states, the resilient MPC algorithm explicitly aborts and hands over control to a deterministic, rule-based hardware failsafe (e.g., triggering a safe emergency plant shutdown). This fallback guarantees that the MPC never optimizes based on catastrophically divergent twin states.

\subsection{Experimental Protocol and Implementation}\label{sec:method_eval}

We validate the framework on two industrial testbeds: SWaT (six stages, 51 sensors, 36 attacks) and WADI (three stages, 127 sensors, 15 attacks), using an 80/10/10 temporally-separated data split. Evaluation metrics include Precision, Recall, F1, and discrimination balance $\epsilon_{\text{disc}} = |F1_{\mathcal{A}_S} - F1_{\mathcal{A}_M}|$. Operational metrics comprise False Alarm Rate, Time to Detection, and disruption cost $\mathcal{D}_{\text{rel}}$. Detection baselines include OmniAnomaly~\citep{Su2019OmniAnomaly}, USAD~\citep{Audibert2020USAD}, MTAD-GAT~\citep{Zhao2020MTADGAT}, and TranAD~\citep{Tuli2022TranAD}, with robustness tested via FGSM/PGD, 30\% sensor dropout, and cross-domain transfer. To conduct multi-class evaluation without destructively modifying native baseline architectures, we employ a \textit{Linear Probing} protocol. Baselines are trained on their original binary anomaly objectives, their encoder weights are frozen, and a Multinomial Logistic Regression classifier is trained on the latent embeddings. We explicitly acknowledge a structural advantage in this protocol: baseline encoders are natively optimized for unsupervised reconstruction, whereas i-SDT is explicitly regularized for taxonomy separation via MMD. While full end-to-end supervised fine-tuning of baselines is computationally prohibitive, this evaluation successfully demonstrates that standard reconstruction manifolds intrinsically lack the contrastive margins required for reliable multi-stage attack discrimination. DT-MPC~\citep{Zhang2025DTResilientMPC} is evaluated strictly on the resilient control sub-task, as it lacks a multi-class discriminator.

The framework is implemented in PyTorch~2.0 (CUDA~11.8) and  evaluated on a workstation equipped with Intel Core i9-14900K,  NVIDIA RTX\,4070 (12\,GB), and 64\,GB RAM. Key hyperparameters are: DT window $\tau = 10$, four TCN layers, base mass-balance weight $\alpha_0 = 0.05$, pipe pressure weight $\lambda_p = 0.02$; ADE with $W = 50$, hidden size~64, MMD weight $\beta = 0.1$, confidence threshold $\gamma = 0.8$; and MPC horizon $H = 15$ with risk decay rate $\lambda_d = 0.1$. Full hyperparameter sensitivity is reported in Table~\ref{tab:ablation}.

\begin{table*}[t]
\scriptsize
\centering
\caption{Multi-class attack detection on SWaT (36 scenarios). All baselines evaluated via Linear Probing on frozen latent representations; this protocol structurally favours i-SDT whose encoder is natively regularised for taxonomy separation via MMD, and the reported gains should be interpreted accordingly. Bold = best, $^\dagger$ = significant ($p<0.05$, paired bootstrap, $10^4$ resamples). Evaluated on NVIDIA RTX\,4070 (12\,GB).} 
\label{tab:detection_swat}
\setlength{\tabcolsep}{4pt}
\begin{tabular*}{\textwidth}{@{\extracolsep{\fill}}
    l
    ccc
    ccc
    ccc
    cc
    c
    @{}}
\toprule
\multirow{2}{*}{\textbf{Method}}
    & \multicolumn{3}{c}{\textbf{Normal} ($\mathcal{N}$)}
    & \multicolumn{3}{c}{\textbf{Single-Stage} ($\mathcal{A}_S$)}
    & \multicolumn{3}{c}{\textbf{Multi-Stage} ($\mathcal{A}_M$)}
    & \multicolumn{2}{c}{\textbf{Overall}}
    & \multirow{2}{*}{\textbf{MTTD (s)}} \\
\cmidrule(lr){2-4}\cmidrule(lr){5-7}\cmidrule(lr){8-10}\cmidrule(lr){11-12}
    & Prec. & Rec. & $F1$
    & Prec. & Rec. & $F1$
    & Prec. & Rec. & $F1$
    & FAR   & $F1_{\mathcal{A}}$ & \\
\midrule
OmniAnomaly~\citep{Su2019OmniAnomaly}
    & 0.903 & 0.941 & 0.922
    & 0.791 & 0.748 & 0.769
    & 0.763 & 0.694 & 0.727
    & 0.091 & 0.748 & 98.4 \\
USAD~\citep{Audibert2020USAD}
    & 0.918 & 0.949 & 0.933
    & 0.812 & 0.771 & 0.791
    & 0.779 & 0.724 & 0.751
    & 0.076 & 0.771 & 87.6 \\
MTAD-GAT~\citep{Zhao2020MTADGAT}
    & 0.931 & 0.958 & 0.944
    & 0.836 & 0.803 & 0.819
    & 0.804 & 0.771 & 0.787
    & 0.067 & 0.803 & 74.2 \\
TranAD~\citep{Tuli2022TranAD}
    & 0.942 & 0.964 & 0.953
    & 0.851 & 0.821 & 0.836
    & 0.816 & 0.787 & 0.801
    & 0.059 & 0.819 & 68.5 \\
\midrule
i-SDT ($\gamma=0.80$)
    & 0.967$^\dagger$ & 0.981$^\dagger$ & 0.974$^\dagger$
    & 0.912$^\dagger$ & 0.894$^\dagger$ & 0.903$^\dagger$
    & 0.897$^\dagger$ & 0.871$^\dagger$ & 0.884$^\dagger$
    & \textbf{0.033}$^\dagger$ & \textbf{0.894}$^\dagger$ 
    & \textbf{43.7}$^\dagger$ \\
i-SDT ($\gamma=0.90$)
    & 0.973 & 0.987 & 0.980
    & 0.897 & 0.865 & 0.881
    & 0.883 & 0.851 & 0.867
    & 0.011 & 0.874 & 56.2 \\
\midrule
\textit{Improvement vs.\ Best Baseline}
    & +2.7\% & +1.8\% & +2.2\%
    & +7.2\% & +8.9\% & +8.0\%
    & +9.9\% & +10.7\% & +10.4\%
    & $-$44.1\% & +9.1\% & $-$36.2\% \\
\bottomrule
\end{tabular*}
\end{table*}

\paragraph{Operational Tuning for Industrial Deployment:}
Confidence threshold $\gamma$ enables risk-performance trade-offs. Industrial operation requires FAR $<1\%$ to prevent alarm fatigue. By increasing $\gamma$ from 0.80 to 0.90, FAR reduces to 1.1\% (9.5 min/day) with 3\% drop in $F1_{\mathcal{A}}$ and 12.5s increase in MTTD. This threshold is configurable per-plant based on operator workload and safety-criticality of processes.

\begin{table*}[t]
\scriptsize
\centering
\caption{Multi-class attack detection on WADI (15 scenarios). Same Linear Probing evaluation protocol as Table~\ref{tab:detection_swat}; the same structural caveat applies. Results for $\gamma{=}0.80$. Bold = best, $^\dagger$ = significant ($p<0.05$). Evaluated on NVIDIA RTX\,4070 (12\,GB).}

\label{tab:detection_wadi}
\setlength{\tabcolsep}{4pt}
\begin{tabular*}{\textwidth}{@{\extracolsep{\fill}}
    l
    ccc
    ccc
    ccc
    cc
    c
    @{}}
\toprule
\multirow{2}{*}{\textbf{Method}}
    & \multicolumn{3}{c}{\textbf{Normal} ($\mathcal{N}$)}
    & \multicolumn{3}{c}{\textbf{Single-Stage} ($\mathcal{A}_S$)}
    & \multicolumn{3}{c}{\textbf{Multi-Stage} ($\mathcal{A}_M$)}
    & \multicolumn{2}{c}{\textbf{Overall}}
    & \multirow{2}{*}{\textbf{MTTD (s)}} \\
\cmidrule(lr){2-4}\cmidrule(lr){5-7}\cmidrule(lr){8-10}\cmidrule(lr){11-12}
    & Prec. & Rec. & $F1$
    & Prec. & Rec. & $F1$
    & Prec. & Rec. & $F1$
    & FAR   & $F1_{\mathcal{A}}$ & \\
\midrule
OmniAnomaly~\citep{Su2019OmniAnomaly}
    & 0.881 & 0.928 & 0.904
    & 0.717 & 0.669 & 0.692
    & 0.694 & 0.631 & 0.661
    & 0.119 & 0.677 & 112.7 \\
USAD~\citep{Audibert2020USAD}
    & 0.897 & 0.939 & 0.918
    & 0.749 & 0.703 & 0.725
    & 0.718 & 0.662 & 0.689
    & 0.098 & 0.707 & 95.3 \\
MTAD-GAT~\citep{Zhao2020MTADGAT}
    & 0.916 & 0.954 & 0.935
    & 0.803 & 0.774 & 0.788
    & 0.771 & 0.738 & 0.754
    & 0.081 & 0.771 & 81.6 \\
TranAD~\citep{Tuli2022TranAD}
    & 0.924 & 0.958 & 0.941
    & 0.814 & 0.786 & 0.800
    & 0.783 & 0.749 & 0.766
    & 0.074 & 0.783 & 76.9 \\
\midrule
i-SDT ($\gamma=0.80$, Proposed)
    & 0.958$^\dagger$ & 0.976$^\dagger$ & 0.967$^\dagger$
    & 0.887$^\dagger$ & 0.864$^\dagger$ & 0.875$^\dagger$
    & 0.871$^\dagger$ & 0.842$^\dagger$ & 0.856$^\dagger$
    & \textbf{0.042}$^\dagger$ & \textbf{0.866}$^\dagger$
    & \textbf{51.4}$^\dagger$ \\
\midrule
\textit{Improvement vs.\ Best Baseline}
    & +3.7\% & +1.9\% & +2.8\%
    & +8.9\% & +9.9\% & +9.4\%
    & +11.3\% & +12.4\% & +11.8\%
    & $-$43.2\% & +10.6\% & $-$33.2\% \\
\bottomrule
\end{tabular*}
\end{table*}

\section{Results}\label{sec:results}

This section presents experimental validation of i-SDT framework on two industrial testbeds: SWaT (Secure Water Treatment) and WADI (Water Distribution). We evaluate detection performance, discrimination capability, resilience, computational efficiency, and robustness under adversarial conditions. Experiments follow protocols in Section~\ref{sec:method_eval}, with statistical significance via paired bootstrap tests ($10^4$ resamples, $\alpha = 0.05$).

\subsection{Datasets and Experimental Configuration}\label{sec:results_setup}
We briefly describe the two industrial testbeds, data splits, and baseline methods used throughout the experiments.

\subsubsection{SWaT Dataset}
SWaT testbed is 6-stage water treatment with 51 sensors monitoring flow, levels, pressure, pH. Dataset spans 11 days: 7 days normal (946,800 samples at 1 Hz) and 4 days with 36 attacks. Three-class labels were assigned using the published attack log~\citep{Mathur2016SWaT}: attacks confined to a single PLC sub-network were labelled $\mathcal{A}_S$ (18 scenarios), while attacks spanning two or more sub-networks within the same episode were labelled $\mathcal{A}_M$ (18 scenarios).

\subsubsection{WADI Dataset}
WADI testbed models 3-stage water distribution with 127 sensors. Dataset has 14 days normal (1,209,600 samples) and 2 days with 15 attacks. The same labelling protocol was applied using the WADI attack log~\citep{Ahmed2017WADI}: 7 single-network attacks ($\mathcal{A}_S$) and 8 cross-network attacks ($\mathcal{A}_M$).

\subsubsection{Data Partitioning}
Both datasets use 80/10/10 split with temporal order preserved. Attacks appear only in validation/test sets; training uses normal data.

\subsubsection{Baseline Methods}
i-SDT is compared with four anomaly detectors on detection and one DT-based controller on mitigation. \textit{Detection baselines:} OmniAnomaly~\citep{Su2019OmniAnomaly} (stochastic RNN with normalizing flow), USAD~\citep{Audibert2020USAD} (dual-decoder autoencoder), MTAD-GAT~\citep{Zhao2020MTADGAT} (graph attention with joint forecasting and reconstruction), and TranAD~\citep{Tuli2022TranAD} (self-conditioned transformer with adversarial training). For fair multi-class evaluation, baselines are assessed via Linear Probing on frozen latent representations without modifying their unsupervised architectures. \textit{Resilience baseline:} DT-MPC~\citep{Zhang2025DTResilientMPC} (LSTM-based DT with deterministic rollback MPC), evaluated on disruption cost and safety metrics only, as it lacks an attack taxonomy discriminator. All experiments run on an Intel Core i9-14900K, NVIDIA RTX\,4070 (12\,GB), and 64\,GB RAM; edge deployment on platforms such as NVIDIA Jetson remains future work.

\subsection{Multi-Class Attack Detection Performance}\label{sec:results_detection}
Table~\ref{tab:detection_swat} and Table~\ref{tab:detection_wadi} present detection metrics on SWaT and WADI test sets. i-SDT achieves research objectives with significant improvements over baselines.

\paragraph{Baseline Evaluation via Linear Probing:}

As established in our evaluation protocol (Section~\ref{sec:method_eval}), we compare i-SDT against baselines using their frozen latent representations. While we explicitly acknowledge this structurally favors i-SDT due to its native MMD regularization, the results confirm a critical architectural reality: standard reconstruction-based spatial-temporal encoders inherently lack the contrastive margins required for complex attack stage separation. The substantial discrimination gap ($|\Delta F1| > 0.12$ for baselines vs.\ $0.019$ for i-SDT) empirically demonstrates that explicit structural regularization is a necessity for achieving high-fidelity multi-class taxonomy discrimination.

\subsubsection{Key Observations}

\paragraph{Overall Detection Performance:}
i-SDT achieves attack $F1$ of 0.894 (SWaT) and 0.866 (WADI), exceeding 0.90 target on SWaT and approaching on WADI. These represent significant improvements of 9.1\% (SWaT) and 10.6\% (WADI) over the strongest baseline (TranAD).

\paragraph{Class-Specific Discrimination:}
Framework distinguishes single-stage and multi-stage attacks: $F1_{\mathcal{A}_S} = 0.903$ (SWaT), $0.875$ (WADI); $F1_{\mathcal{A}_M} = 0.884$ (SWaT), $0.856$ (WADI), both above $>0.85$ target. Discrimination balance $|\Delta F1| = 0.019$ for both datasets shows balanced detection.

\paragraph{False Alarm Reduction:}
FAR is 0.033 (SWaT) and 0.042 (WADI), well below 0.05, representing 44.1\% (SWaT) and 43.2\% (WADI) reduction versus  the best baseline (TranAD~\citep{Tuli2022TranAD}), thanks to statistical confidence gating (Algorithm~\ref{alg:ade_classify}).

\paragraph{Detection Latency:}
Mean time to detection is 43.7s (SWaT) and 51.4s (WADI), under 60s objective. i-SDT lowers MTTD by 36.2\% (SWaT) and 33.2\% (WADI) versus the best baseline (TranAD), demonstrating faster attack recognition.

\begin{table}[h]
\centering
\caption{ADE embedding separation of single- vs.\ multi-stage attacks (higher MMD = better).}
\label{tab:discrimination_analysis}
\resizebox{\columnwidth}{!}{%
\begin{tabular}{lcccc}
\toprule
\multirow{2}{*}{\textbf{Method}}
    & \multicolumn{2}{c}{\textbf{SWaT}}
    & \multicolumn{2}{c}{\textbf{WADI}} \\
\cmidrule(lr){2-3}\cmidrule(lr){4-5}
    & MMD$^2$ & Sil.\ Coeff.
    & MMD$^2$ & Sil.\ Coeff. \\
\midrule
OmniAnomaly~\citep{Su2019OmniAnomaly}
    & 0.038 & 0.297 & 0.034 & 0.274 \\
USAD~\citep{Audibert2020USAD}
    & 0.051 & 0.348 & 0.046 & 0.321 \\
MTAD-GAT~\citep{Zhao2020MTADGAT}
    & 0.069 & 0.433 & 0.063 & 0.411 \\
TranAD~\citep{Tuli2022TranAD}
    & 0.078 & 0.468 & 0.071 & 0.439 \\
\midrule
i-SDT (Proposed)
    & \textbf{0.142}$^\dagger$ & \textbf{0.673}$^\dagger$
    & \textbf{0.131}$^\dagger$ & \textbf{0.641}$^\dagger$ \\
\midrule
\textit{Improvement vs.\ TranAD}
    & +82.1\% & +47.0\% & +84.5\% & +47.8\% \\
\bottomrule
\end{tabular}%
}
\end{table}

\subsection{Attack Taxonomy Discrimination Analysis}\label{sec:results_discrimination}

To quantify discrimination between single-stage and multi-stage attacks, we analyze latent embedding distributions. Table~\ref{tab:discrimination_analysis} presents separation metrics.

MMD regularization (Eq.~\eqref{eq:method_discrim_loss}) in i-SDT maximizes distributional separation, achieving MMD$^2$ of 0.142 (SWaT) and 0.131 (WADI)---82\% higher than the strongest baseline (TranAD: 0.078 SWaT, 0.071 WADI), as reported in Table~\ref{tab:discrimination_analysis}. Silhouette coefficients above 0.6 indicate well-separated clusters, confirming ADE learns discriminative representations of attack taxonomies.

\begin{table*}[t]
\scriptsize
\centering
\caption{Disruption mitigation and resilience (simulation-in-the-loop evaluation; see Section~\ref{sec:system_assumption}).
$\mathcal{D}_{\text{rel}}$: normalized cost;
$T_{\text{recovery}}$: recovery time.}
\label{tab:resilience}
\setlength{\tabcolsep}{4pt}
\begin{tabular*}{\textwidth}{@{\extracolsep{\fill}}
    l
    cccc
    cccc
    @{}}
\toprule
\multirow{2}{*}{\textbf{Mitigation Strategy}}
    & \multicolumn{4}{c}{\textbf{SWaT (36 Attacks)}}
    & \multicolumn{4}{c}{\textbf{WADI (15 Attacks)}} \\
\cmidrule(lr){2-5}\cmidrule(lr){6-9}
    & $\mathcal{D}_{\text{rel}}$
    & Safety Viol.
    & $T_{\text{recovery}}$ (s)
    & Act.\ Sat.
    & $\mathcal{D}_{\text{rel}}$
    & Safety Viol.
    & $T_{\text{recovery}}$ (s)
    & Act.\ Sat. \\
\midrule
Immediate Shutdown
    & 1.000 & 0\%  & 3847 & 0\%
    & 1.000 & 0\%  & 4231 & 0\% \\
DT-MPC~\citep{Zhang2025DTResilientMPC}
    & 0.614 & 6.9\% & 378 & 9.8\%
    & 0.639 & 8.1\% & 401 & 11.3\% \\
i-SDT w/o Robust Margin
    & 0.563 & 8.3\% & 412 & 12.5\%
    & 0.591 & 9.7\% & 438 & 14.2\% \\
\midrule
\textbf{i-SDT (Full Robust MPC)}
    & \textbf{0.437}$^\dagger$
    & \textbf{2.1\%}$^\dagger$
    & \textbf{267}$^\dagger$
    & \textbf{3.8\%}$^\dagger$
    & \textbf{0.461}$^\dagger$
    & \textbf{2.9\%}$^\dagger$
    & \textbf{284}$^\dagger$
    & \textbf{4.6\%}$^\dagger$ \\
\midrule
\textit{Target Objective}
    & $\leq 0.50$ & $\leq 5\%$ & $<300$ & --
    & $\leq 0.50$ & $\leq 5\%$ & $<300$ & -- \\
\bottomrule
\end{tabular*}
\end{table*}

\subsection{Resilient Control and Disruption Mitigation}\label{sec:results_resilience}

Table~\ref{tab:resilience} evaluates DT-guided resilient control in maintaining operational continuity under attacks. We compare three strategies: (1) immediate shutdown (baseline), (2) resilient control without MPC margins, and (3) full i-SDT with robust MPC.

\subsubsection{Key Findings}

\paragraph{Disruption Cost Reduction:}
i-SDT achieves $\mathcal{D}_{\text{rel}} = 0.437$ (SWaT) and $0.461$ (WADI), meeting objective $\mathcal{D} \leq 0.50 \times \mathcal{D}_{\text{shutdown}}$. This represents 56.3\% and 53.9\% reduction versus shutdown while maintaining safety. Robust MPC (Eq.~\eqref{eq:method_mpc_objective}) with uncertainty margins reduces disruption by additional 22.4\% versus non-robust control.

\paragraph{Safety Constraint Satisfaction:}
We distinguish \textit{critical} violations ($|\mathbf{Y} - \mathbf{Y}_{\text{limit}}| > 10\%$) from \textit{minor} transient excursions ($<5\%$ deviation, $<5$s). i-SDT achieves \textbf{zero critical violations} across all 51 attacks; the reported 2.1\% (SWaT) and 2.9\% (WADI) figures are minor excursions on non-critical sensors (max.\ 3.2\%, avg.\ 2.8s), with tank levels held within [15\%,\,85\%] throughout. Control without robustness margins causes 8.3\%--9.7\% violations including critical events.

\paragraph{Recovery Time:}
i-SDT returns system to safe steady-state in 267s (SWaT) and 284s (WADI), below 300s objective. Recovery is 14.4$\times$ faster than shutdown-restart (60--70 minutes).

\paragraph{Actuator Behavior:}
Actuator saturation events occur in only 3.8\% (SWaT) and 4.6\% (WADI) of control actions, indicating smooth transitions. MPC penalty on $\|\mathbf{u}(t) - \mathbf{u}_{\text{nom}}(t)\|_R^2$ (Eq.~\eqref{eq:method_mpc_objective}) prevents abrupt commands.

\subsection{Computational Latency and Algorithmic Efficiency}\label{sec:results_efficiency}

Table~\ref{tab:computational_efficiency} details the algorithmic latency measured on a high-performance evaluation workstation (Intel Core i9-14900K, NVIDIA RTX\,4070, 64\,GB RAM). While establishing true real-time feasibility requires deployment on resource-constrained industrial Programmable Logic Controllers (PLCs) or edge computing nodes (e.g., NVIDIA Jetson), evaluating the baseline algorithmic latency is the first prerequisite for real-world deployment. Ensuring that the complete inference-and-control loop operates comfortably within the system's sampling period ($T_s = 1.0$\,s) demonstrates the absence of fundamental computational bottlenecks.

\begin{table}[h]
\centering
\scriptsize
\setlength{\tabcolsep}{4pt}
\caption{Computational Performance on Evaluation Platform (Intel Core i9-14900K, NVIDIA RTX\,4070 12\,GB, 64\,GB RAM, PyTorch~2.0, CUDA~11.8). All inference timings averaged over $10^3$ consecutive cycles on the SWaT test set ($d=51$). Edge deployment is reserved as future work (Section~\ref{sec:discussion}).}
\label{tab:computational_efficiency}
\begin{tabular}{lp{22pt}p{80pt}}
\toprule
\textbf{Metric} & \textbf{Value} & \textbf{Notes} \\
\midrule
\multicolumn{3}{l}{\textit{Training (Offline):}} \\[2pt]
\quad DT training time
    & 2.8\,hrs  & SWaT, $E{=}200$, early-stop \\
\quad ADE training time
    & 1.4\,hrs  & SWaT, $E{=}150$, early-stop \\
\quad Fine-tuning (10\%)
    & 0.5\,hrs  & New-plant adaptation \\
\midrule
\multicolumn{3}{l}{\textit{Inference (Online, per 1\,Hz cycle):}} \\[2pt]
\quad DT inference (TCN + $N_{\text{MC}}{=}50$)
    & 16.1\,ms  & 50 stochastic fwd.\ passes \\
\quad ADE classification (BiGRU-$W{=}50$)
    & 3.2\,ms   & Window encoding + softmax \\
\quad MPC optimisation (SQP, $H{=}15$)
    & 49.7\,ms  & Warm-start; ${\approx}8$ iters \\[2pt]
\quad \textbf{Total cycle latency}
    & \textbf{69.0\,ms}
    & $14.5{\times}$ margin vs.\ $T_s$ \\
\midrule
\multicolumn{3}{l}{\textit{Memory \& Throughput:}} \\[2pt]
\quad Peak GPU memory
    & 282\,MB
    & DT + ADE params \& activations \\
\quad Throughput
    & 14.5\,Hz
    & Max at 1\,Hz control rate \\
\bottomrule
\end{tabular}
\end{table}

\subsubsection{Latency Breakdown and MPC Optimization}
A primary challenge in real-time deployment is MPC solver latency. The measured total cycle latency of 69.0\,ms on the evaluation workstation demonstrates a $14.5{\times}$ margin against the 1\,Hz sampling period ($T_s = 1000$\,ms). This performance is achieved through the following breakdown:

\begin{itemize}
    \item \textbf{DT Inference:} The parallelized TCN architecture on the GPU allows rapid state prediction and MC Dropout uncertainty estimation within a single forward pass batch. 
    \item \textbf{MPC Optimisation:} Solving the nonlinear optimisation problem (Eq.~\ref{eq:method_mpc_objective}) dominates the total cycle time. A \textit{Warm Start} strategy, initialising the SQP solver from the shifted optimal trajectory 
    of the previous step, reduces iteration count by approximately 60\% compared to a cold start.
\end{itemize}

The total measured latency is well below the 1\,s sampling period  of both SWaT and WADI, confirming real-time feasibility on the  evaluation platform. Deployment on resource-constrained edge  devices such as NVIDIA Jetson Xavier NX is reserved as future  work (Section~\ref{sec:discussion}); preliminary model-complexity  analysis suggests sub-second latency is achievable given the  compact 282\,MB footprint of the combined DT and ADE.

\subsubsection{Resource Utilization}
Beyond latency, the system maintains efficiency in memory and throughput:
\begin{itemize}
    \item \textbf{Memory Footprint:} Peak memory usage is 282 MB, well within the 500 MB constraint typically allocated for background safety processes, leaving sufficient headroom for other SCADA operations.
    \item \textbf{Throughput:} The system sustains 14.5\,cycles/s on the evaluation workstation, providing a $14.5{\times}$ margin over the 1\,Hz control requirement. For higher-frequency processes, the parallel structure of TCN blocks supports linear scaling via multi-GPU deployment.
\end{itemize}

\subsection{Worst-Case Performance Bounds}\label{sec:worst_case}

Beyond average metrics, we analyze worst-case behavior critical for safety certification:

\begin{equation}\label{eq:mttd_worst}
\begin{aligned}
\text{MTTD}_{\text{worst}} &=
\max_{a \in \mathcal{A}_{\text{test}}} \text{MTTD}(a) \\
&= 127.3\,\text{s} \quad \text{(Attack \#23, SWaT)}
\end{aligned}
\end{equation}

\begin{equation}\label{eq:far_worst}
\text{FAR}_{\text{worst-hour}} = \max_{h=1}^{24} \frac{\text{\# false alarms in hour } h}{\text{\# normal samples in hour } h} = 8.7\% 
\end{equation}

Attack \#23 (multi-stage FDI on flow sensors in Stages 2--4) exhibits stealthy residuals require extended windows for classification. Worst-case hour (14:00--15:00) corresponds to setpoint transitions during shift changes, causing benign anomalies. Edge cases inform deployment: (1) dynamically extend the residual window to $W=75$ for highly stealthy persistence, and (2) temporarily increase the confidence threshold $\gamma$ and dynamically relax the risk back-off coefficient $\eta$ during scheduled setpoint transitions to prevent alarm fatigue without disabling security monitoring.

\paragraph{Probabilistic Safety Guarantee:}
Under MC Dropout with $N_{\text{MC}}=50$, we derive empirical upper bound on critical violations:

\begin{equation}\label{eq:prob_safety}
P(\text{Critical Viol.}) \leq \underbrace{\frac{1}{N_{\text{MC}}} \sum_{n=1}^{N_{\text{MC}}} \mathbb{1}[\hat{\mathbf{Y}}^{(n)} \in \mathcal{X}_{\text{unsafe}}]}_{\text{MC estimate}}+\underbrace{\epsilon_{\text{calib}}}_{\text{calibration error}}
\end{equation}

where $\epsilon_{\text{calib}} = 0.003$ is computed via Platt scaling on validation data. Across all 51 test attacks, zero critical violations were observed; 
the one-sided 95\% Clopper--Pearson upper bound yields $P(\text{Crit.\ Viol.}) \leq 0.069$. This bound is derived from simulation-in-the-loop trials and should not be interpreted as a certified safety guarantee for deployment; it provides a statistically consistent baseline for comparing control strategies under identical experimental conditions.

\subsection{Robustness Validation}\label{sec:results_robustness}
We evaluate i-SDT robustness across three stress conditions: white-box and black-box adversarial evasion attempts, random sensor dropout, and cross-domain transfer to an unseen facility.

\begin{table}[htb]
\centering
\scriptsize
\caption{Robustness to adversarial attacks on SWaT.
$\epsilon$: perturbation size; evasion rate = fraction
evading detection ($F1<0.5$).}
\label{tab:adversarial_robustness}
\begin{tabular}{lcccc}
\toprule
\multirow{2}{*}{\textbf{Attack Type}}
    & \multicolumn{2}{c}{$\boldsymbol{\epsilon = 0.1}$}
    & \multicolumn{2}{c}{$\boldsymbol{\epsilon = 0.3}$} \\
\cmidrule(lr){2-3}\cmidrule(lr){4-5}
    & $F1_{\mathcal{A}}$ & Evasion
    & $F1_{\mathcal{A}}$ & Evasion \\
\midrule
No Attack (Clean)
    & 0.894 & -- & 0.894 & -- \\
\midrule
FGSM (White-box)
    & 0.862 & 5.6\%  & 0.781 & 16.7\% \\
PGD (White-box)
    & 0.847 & 8.3\%  & 0.753 & 22.2\% \\
Black-box Transfer
    & 0.876 & 2.8\%  & 0.829 & 8.3\%  \\
\midrule
\textit{Avg.\ Degradation}
    & -3.5\% & -- & -12.6\% & -- \\
\bottomrule
\end{tabular}
\end{table}
\subsubsection{Adversarial Perturbations}

Table~\ref{tab:adversarial_robustness} evaluates i-SDT's resilience against white-box and black-box adversarial attacks designed to evade detection.

\paragraph{Findings:}
PI regularization (Eq.~\eqref{eq:physics_penalty_mass} and \eqref{eq:physics_penalty_pressure}) provides inherent robustness: adversarial perturbations that violate physical conservation laws are flagged by increased physics residuals. Under moderate perturbation budget ($\epsilon = 0.1$), i-SDT maintains $F1_{\mathcal{A}} > 0.84$ with only 3.5\% average degradation. At higher budget ($\epsilon = 0.3$), performance degrades by 12.6\% but remains above 0.75, indicate graceful degradation rather than catastrophic failure.

\paragraph{Critical Role of Robust MPC Margins:}
The safety violation column in Table~\ref{tab:ablation} reveals the most significant finding: removing uncertainty-aware robustness margins ($\boldsymbol{\epsilon}_{\text{margin}} = 0$) causes a \textbf{295\% increase} in safety constraint violations (from 2.1\% to 8.3\%), despite having \textit{no impact} on detection performance ($F1_{\mathcal{A}}$ unchanged). This validates the necessity of explicit uncertainty quantification (Section~\ref{sec:uncertainty_quantification}) for safe control under attack. Without robustness margins, the MPC trusts DT predictions as ground truth, causing control actions to push the system closer to safety boundaries. When prediction errors compound during attacks—precisely when uncertainty $\Sigma_{\hat{Y}}(t)$ is highest—the system violates safety constraints. The uncertainty-aware margins act as a critical safety buffer, maintaining violations below 5\% while achieving 56.3\% disruption reduction compared to shutdown.

\subsubsection{Sensor Dropout Resilience}

Table~\ref{tab:sensor_dropout} evaluates robustness to random sensor failures, a common real-world scenario.

\begin{table}[htb]
\centering
\scriptsize
\caption{Performance under random sensor dropout.
$p$: fraction of sensors unavailable per step.}
\label{tab:sensor_dropout}
\begin{tabular}{lccc}
\toprule
\textbf{Dropout Rate ($p$)}
    & $\boldsymbol{F1_{\mathcal{A}}}$ \textbf{(SWaT)}
    & $\boldsymbol{F1_{\mathcal{A}}}$ \textbf{(WADI)}
    & \textbf{MTTD (s)} \\
\midrule
$p = 0.0$ (No Dropout) & 0.894 & 0.866 & 43.7 \\
$p = 0.1$              & 0.871 & 0.847 & 51.2 \\
$p = 0.2$              & 0.834 & 0.812 & 63.8 \\
$p = 0.3$              & 0.781 & 0.762 & 82.5 \\
\midrule
\textit{Degradation at $p=0.2$}
    & -6.7\% & -6.2\% & +46.0\% \\
\bottomrule
\end{tabular}
\end{table}

\paragraph{Findings:}
DT temporal window ($\tau = 10$) enable inference of missing sensor values. At dropout $p = 0.2$, i-SDT maintains $F1_{\mathcal{A}} > 0.81$. Detection latency increases 46\% as system requires additional observations, but remains acceptable.

\subsubsection{Cross-Domain Generalization}

Table~\ref{tab:domain_shift} evaluates transfer learning performance: training on SWaT normal data and testing on WADI with minimal fine-tuning (10\% of WADI normal data).

\begin{table}[htb]
\centering
\scriptsize
\caption{Cross-Domain Transfer (SWaT $\to$ WADI).
\textbf{Note:} 9.7\% gap indicates domain-specific tuning
remains necessary despite physics priors; transfer learning
reduces labeling burden by 90\% but cannot eliminate it entirely.}
\label{tab:domain_shift}
\begin{tabular}{lcc}
\toprule
\textbf{Configuration}
    & $\boldsymbol{F1_{\mathcal{A}}}$
    & \textbf{FAR} \\
\midrule
No Transfer (Train on WADI) & 0.866 & 0.042 \\
Direct Transfer (Zero-shot) & 0.647 & 0.183 \\
Fine-tune 10\% WADI Normal  & 0.782 & 0.089 \\
Fine-tune 50\% WADI Normal  & 0.831 & 0.061 \\
\midrule
\textit{Gap (10\% Fine-tune)} & -9.7\% & +111.9\% \\
\bottomrule
\end{tabular}
\end{table}

\paragraph{Findings:}
Direct zero-shot transfer suffer significant degradation (25.3\% drop in $F1_{\mathcal{A}}$) due to domain shift. However fine-tuning on 10\% of target normal data recovers 90.3\% of full-domain performance, show PI priors from SWaT partially generalize to WADI.

\subsection{Ablation Studies}\label{sec:results_ablation}

We systematically isolate the contribution of each i-SDT component by disabling or replacing it while holding all other settings fixed. Results on SWaT are reported in Table~\ref{tab:ablation}.

\begin{table*}[htb]
\centering
\caption{Comprehensive Ablation Study on SWaT}
\label{tab:ablation}
\scriptsize
\begin{tabular}{lcccccp{40pt}}
\toprule
\textbf{Configuration}
    & $\boldsymbol{F1_{\mathcal{A}}}$
    & \textbf{FAR}
    & $\boldsymbol{\mathcal{D}_{\text{rel}}}$
    & \textbf{Safety Viol.}
    & \textbf{MTTD (s)}
    & \textbf{Notes} \\
\midrule
i-SDT (Full)
    & 0.894 & 0.033 & 0.437 & 2.1\% & 43.7 & Baseline \\
\midrule
\multicolumn{7}{l}{\textit{Physics Weight Ablation:}} \\[2pt]
    \quad $\alpha_0 = 0$ (no mass balance)
    & 0.853 & 0.067 & 0.512 & 3.8\% & 62.3 & $-4.6\%$ \\
    \quad $\lambda_p = 0$ (no pipe physics)
    & 0.864 & 0.058 & 0.491 & 3.2\% & 58.1 & $-3.3\%$ \\
    \quad $\alpha_0 = 0.01$
    & 0.871 & 0.051 & 0.478 & 2.9\% & 53.1 & $-2.6\%$ \\
\quad $\alpha_0 = 0.10$
    & 0.882 & 0.041 & 0.461 & 2.5\% & 48.2 & $-1.3\%$ \\
\midrule
\multicolumn{7}{l}{\textit{MMD Weight Ablation:}} \\[2pt]
\quad $\beta = 0$ (no MMD)
    & 0.869 & 0.041 & 0.471 & 2.5\% & 51.8 & $-2.8\%$ \\
\quad $\beta = 0.05$
    & 0.883 & 0.037 & 0.452 & 2.3\% & 47.6 & $-1.2\%$ \\
\quad $\beta = 0.20$
    & 0.889 & 0.035 & 0.443 & 2.2\% & 45.1 & $-0.6\%$ \\
\midrule
\multicolumn{7}{l}{\textit{Control Strategy:}} \\[2pt]
\quad w/o Robust Margins
    & 0.894 & 0.033 & 0.563 & \textbf{8.3\%} & 43.7 & +295\% viol. \\
\midrule
\multicolumn{7}{l}{\textit{Encoder Architecture:}} \\[2pt]
\quad Standard GRU
    & 0.883 & 0.038 & 0.437 & 2.3\% & 47.2 & $-1.2\%$ \\
\quad LSTM
    & 0.876 & 0.041 & 0.437 & 2.6\% & 49.8 & $-2.0\%$ \\
\bottomrule
\end{tabular}
\end{table*}

\paragraph{Ablation Insights:}
Sensitivity analysis reveals robust ranges: $\alpha_0 \in [0.05, 0.10]$ and $\beta \in [0.10, 0.20]$ yield <2\% variation, enabling deployment without per-plant tuning. The 295\% increase in safety violations when removing robust margins validates critical role of uncertainty quantification.

\subsubsection{Component Contributions}
\paragraph{Physics Regularization ($\alpha_0$):}
Removing physics constraint ($\alpha_0 = 0$) causes most significant degradation: 4.6\% drop in $F1_{\mathcal{A}}$, 103\% increase in FAR, and 17.2\% increase in disruption cost. PI learning improves DT prediction accuracy and provides robustness against adversarial attacks violating conservation laws.

\paragraph{MMD Regularization ($\beta$):}
Disabling distributional regularization reduces $F1_{\mathcal{A}}$ by 2.8\%, with impact on multi-stage detection ($\Delta F1_{\mathcal{A}_M} = -3.2\%$). This confirm MMD enhances discrimination between attack taxonomies.

\paragraph{Robust MPC Margins:}
Removing uncertainty-aware margins has no impact on detection but increases disruption cost by 28.8\%. Tightened constraints (Eq.~\eqref{eq:mpc_safety_robust}) prevent violations under uncertainty.

\paragraph{Encoder Architecture:}
Replacing BiGRU with GRU or LSTM causes minor degradation (1--2\%), indicating bidirectionality provides modest improvements in capturing temporal patterns.

\subsection{Comparison with State-of-the-Art}\label{sec:results_sota}
Table~\ref{tab:sota_comparison} compares i-SDT against recent state-of-the-art methods on SWaT. Since these methods solve the binary detection task, they are included for contextual reference only; the methodologically sound multi-class comparison appears in Tables~\ref{tab:detection_swat}--\ref{tab:detection_wadi}.

\begin{table}[htb]
\centering
\caption{Comparison with recent SOTA on SWaT (binary detection,
for contextual reference only$^*$).}
\label{tab:sota_comparison}
\resizebox{\columnwidth}{!}{%
\begin{tabular}{lccc}
\toprule
\textbf{Method} & \textbf{Precision} & \textbf{Recall} & $\boldsymbol{F1}$ \\
\midrule
OmniAnomaly (2019)~\citep{Su2019OmniAnomaly}$^*$
    & 0.938 & 0.821 & 0.876 \\
USAD (2020)~\citep{Audibert2020USAD}$^*$
    & 0.946 & 0.837 & 0.888 \\
MTAD-GAT (2020)~\citep{Zhao2020MTADGAT}$^*$
    & 0.951 & 0.862 & 0.904 \\
TranAD (2022)~\citep{Tuli2022TranAD}$^*$
    & 0.957 & 0.874 & 0.914 \\
LGAT (2025)~\citep{Wen2025LGAT}$^*$
    & 0.961 & 0.891 & 0.925 \\
\midrule
\textbf{i-SDT (Proposed)}
    & \textbf{0.967} & \textbf{0.918} & \textbf{0.942} \\
\bottomrule
\end{tabular}%
}
\par\vspace{4pt}
{\footnotesize $^*$Figures taken from original papers under binary (Normal vs.\ Attack) protocols on varying attack subsets; provided for contextual reference only and do \emph{not} constitute a head-to-head comparison. i-SDT solves the strictly harder 3-class problem (Tables~\ref{tab:detection_swat}--\ref{tab:detection_wadi}), making direct numerical comparison methodologically unsound.}
\end{table}

\paragraph{Note on Comparison:}
Prior methods use binary classification on subsets, i-SDT use multi-class (Normal vs. $\mathcal{A}_S$ vs. $\mathcal{A}_M$) on all attacks. i-SDT achieve SOTA with attack characterization and mitigation absent in prior work.

\subsection{Summary of Results}\label{sec:results_summary}

Experimental validation confirms that i-SDT successfully achieves all three research objectives formulated in Section~\ref{sec:gaps_objectives}:

\begin{itemize}
    \item \textbf{Objective $\mathcal{O}_1$ (Multi-Class Detection):} The framework outperforms the strongest baseline (TranAD) by $\geq 9.1\%$, achieving overall $F1_{\mathcal{A}}$ scores of 0.894 (SWaT) and 0.866 (WADI). It successfully discriminates single- and multi-stage attacks ($F1 > 0.85$) while maintaining a False Alarm Rate of 0.033 and an average detection latency of 43.7\,s (well below the 60\,s target).
    
    \item \textbf{Objective $\mathcal{O}_2$ (Resilient Mitigation):} The robust MPC dynamically restricts normalized disruption costs to $\mathcal{D}_{\text{rel}} = 0.437$ and achieves a swift safe-state recovery ($T_{\text{recovery}} = 267$\,s). It strictly prevents critical physical failures ($P(\text{Crit. Viol.}) \leq 0.069$) and limits minor transient safety violations to merely 2.1\%.
    
    \item \textbf{Objective $\mathcal{O}_3$ (Real-Time Feasibility):} Demonstrating a compact memory footprint of 282\,MB and a total cycle latency significantly below the 1000\,ms sampling period, the system guarantees reliable 1\,Hz throughput for autonomous, plant-level workstation execution.
\end{itemize}

Improvements over baselines are significant ($p < 0.05$). Robustness tests show degradation under adversarial attacks, dropouts, and domain shifts. Ablation confirm each component—physics regularization, MMD, robust MPC—contribute. i-SDT framework advances ICPS security with multi-class detection and minimal disruption.

\section{Discussion}\label{sec:discussion}

This section summarizes the main findings and practical implications of the proposed framework for industrial CPS.

\subsection{Technical and Operational Insights}
The physics-informed module improves detection by enforcing physical conservation laws. Injected false data creates residual violations that act as a secondary anomaly signal, which stabilizes detection and reduces false positive alarms. The MMD term successfully separates attack classes in the latent space, while the BiGRU captures the temporal drifts characteristic of stealthy multi-stage attacks. From an operational perspective, the uncertainty-aware MPC provides significant value by maintaining safe operations instead of triggering full plant shutdowns. As Table~\ref{tab:ablation} demonstrates, transient safety violations are restricted to merely 2.1\% (mostly brief excursions on secondary variables), and the controller strictly prevents the system from entering critical unsafe regions ($\mathcal{X}_{\text{unsafe}}$).

\subsection{Practical Deployment Considerations}
The system integrates with existing SCADA infrastructure without requiring modifications to certified low-level controllers. It collects sensor data via standard OT interfaces and forwards processed alarms to operator consoles. Time synchronization is necessary to align residual windows with sensor timestamps. Deployment costs remain low by utilizing open-source software and standard plant-level workstations, allowing facilities to recover investments through reduced downtime. While operators require minimal training, periodic manual verification is recommended to avoid over-reliance on automated control decisions.

\subsection{Generalization and Transfer Potential}
Cross-domain experiments show that the model learns general patterns transferable to similar facilities. Fine-tuning with a small subset of normal data from a new site recovers near-supervised performance, reducing data labeling efforts by approximately 90\%. Furthermore, the i-SDT architecture can be adapted to other CPS domains by replacing the $\mathcal{L}_{\text{phys}}$ term with the specific governing equations of the target process. For example, power grids can use Kirchhoff's laws \citep{Lu2025DTMicrogrid, Khan2025}, and chemical plants can apply mass-momentum balances \citep{Sayghe2025, Balta2024DTManufacturing}. However, scaling this framework to large plants with numerous decoupled subsystems requires coherent fusion of localized alerts, which remains an open challenge for future work.

\subsection{Limitations and Future Work}
The current evaluation relies on a high-performance workstation. Transitioning the system to resource-constrained edge platforms will require extensive model quantization to meet strict memory and latency limits. Additionally, the approach assumes clean training data and focuses exclusively on sensor false data injections. Future research must evaluate actuator compromises and practical adversarial testing. Finally, while datasets like SWaT and WADI contain simple sensor dropouts that may artificially inflate standard detection scores, their complex physical topology is essential for this study. It allows us to empirically validate our primary contribution: isolating attacks and maintaining continuous, safe predictive control without relying on full infrastructure shutdowns. Validation is currently limited to water treatment testbeds; generalisation to other domains requires domain-specific physics constraints and dedicated evaluation datasets.

\section{Conclusion}\label{sec:conclusion}

This paper presents i-SDT, a self-defending digital twin framework for ICPS security. It has three main contributions. First, a predictive model with TCN and physics constraints improves stability under attacks. Second, Maximum Mean Discrepancy separates normal behaviour from different attack stages, detecting early and complex attacks. Third, a control module with Monte Carlo Dropout adjusts safety margins in real time, avoiding unnecessary shutdowns.

Tests on SWaT and WADI show better F1 scores, fewer false alarms, and less operational disruption, with real-time feasibility confirmed on a workstation platform; edge deployment is identified as promising future work. i-SDT detects multiple attack types accurately and keeps safety violations very low. Its combined architecture of PI robustness, attack discrimination, and uncertainty-aware control addresses broad cyber-physical defense needs, moving towards secure and autonomous industrial systems.


\section*{Data Availability}
Code will be made publicly available upon acceptance.
The repository URL is omitted for blind review.

\appendix
\section{Additional Algorithms and Implementation Details}\label{sec:appendix_algorithms}

This appendix details algorithms and implementation supporting Section~\ref{sec:methodology}.

\subsection{DT Training and Inference Algorithms}
The following algorithms detail the offline training procedure and online uncertainty-aware inference of the physics-guided DT.

\begin{algorithm}[htb]
\scriptsize
\caption{DT Offline Training}
\label{alg:dt_training}
\begin{algorithmic}[1]
\Require Normal operation dataset $\mathcal{D}_{\text{train}} = \{(\mathbf{X}_{\tau}^{(i)}, \mathbf{Y}^{(i)})\}_{i=1}^N$, validation set $\mathcal{D}_{\text{val}}$
\Require Hyperparameters: learning rate $\eta$, base physics weight $\alpha_0$, batch size $B$, epochs $E$, early stopping patience $P$, data variance $\sigma_{\text{data}}^2$
\Ensure Trained DT parameters $\boldsymbol{\Theta}_{\text{DT}}^*$
\State Initialize $\boldsymbol{\Theta}_{\text{DT}}$ using Xavier initialization
\State $\text{best\_val\_loss} \gets \infty$, $\text{patience\_counter} \gets 0$
\For{epoch $e = 1$ to $E$}
    \State Shuffle $\mathcal{D}_{\text{train}}$
    \For{each mini-batch $\mathcal{B} \subset \mathcal{D}_{\text{train}}$ of size $B$}
\State Forward: $\hat{\mathbf{Y}}^{(i)} \gets \mathcal{T}_{\text{DT}}(\mathbf{X}_{\tau}^{(i)}; \boldsymbol{\Theta}_{\text{DT}})$; compute $\Sigma_{\hat{Y}}$, $\lambda_{\text{phys}}$
\State Loss: $\mathcal{L} \gets \mathcal{L}_{\text{pred}} + \sum_{i} \lambda_{\text{phys}}^{(i)} \mathcal{L}_{\text{phys}}^{(i)}$ (Eq.~\eqref{eq:method_dt_loss_bayesian})
      \State Update: $\boldsymbol{\Theta}_{\text{DT}} \gets \boldsymbol{\Theta}_{\text{DT}} - \eta \nabla_{\boldsymbol{\Theta}_{\text{DT}}} \mathcal{L}$
   \EndFor
    \State Evaluate on $\mathcal{D}_{\text{val}}$: $\mathcal{L}_{\text{val}} \gets \text{ValidationLoss}(\boldsymbol{\Theta}_{\text{DT}}, \mathcal{D}_{\text{val}})$
     \If{$\mathcal{L}_{\text{val}} < \text{best\_val\_loss}$}
 \State $\text{best\_val\_loss} \gets \mathcal{L}_{\text{val}}$, save $\boldsymbol{\Theta}_{\text{DT}}^*$, $\text{patience\_counter} \gets 0$
 \Else
 \State $\text{patience\_counter} \gets \text{patience\_counter} + 1$
 \If{$\text{patience\_counter} \ge P$}
 \State \textbf{break} \Comment{Early stopping}
 \EndIf
 \EndIf
\EndFor
\Return $\boldsymbol{\Theta}_{\text{DT}}^*$
\end{algorithmic}
\end{algorithm}

\begin{algorithm}[htb]
\scriptsize
\caption{DT Online Prediction with Uncertainty Quantification}
\label{alg:dt_prediction}
\begin{algorithmic}[1]
\Require Historical context $\mathbf{X}_{\tau}(t) \in \mathbb{R}^{\tau \times d_x}$, trained parameters $\boldsymbol{\Theta}_{\text{DT}}$
\Require MC dropout samples $N_{\text{MC}}$ (default: 50), dropout probability $p_{\text{drop}}$ (default: 0.1)
\Ensure Predicted measurements $\hat{\mathbf{Y}}(t) \in \mathbb{R}^{d_y}$, uncertainty $\Sigma_{\hat{Y}}(t)$
\State Normalize input: $\tilde{\mathbf{X}}_{\tau}(t) \gets (\mathbf{X}_{\tau}(t) - \boldsymbol{\mu}_X) / \boldsymbol{\sigma}_X$
\State Initialize prediction samples: $\mathcal{Y}_{\text{samples}} \gets \emptyset$
\For{$n = 1$ to $N_{\text{MC}}$}
    \State Dropout enabled; $\mathbf{h}^{(n)} \gets \text{TCN}(\tilde{\mathbf{X}}_{\tau}; \boldsymbol{\Theta}_{\text{TCN}}, \xi^{(n)})$; $\tilde{\mathbf{Y}}^{(n)} \gets \text{FC}(\mathbf{h}^{(n)})$
    \State $\hat{\mathbf{Y}}^{(n)} \gets \tilde{\mathbf{Y}}^{(n)} \cdot \boldsymbol{\sigma}_Y + \boldsymbol{\mu}_Y$; add to $\mathcal{Y}_{\text{samples}}$
\EndFor
\State Compute mean: $\hat{\mathbf{Y}}(t) \gets \frac{1}{N_{\text{MC}}} \sum_{n=1}^{N_{\text{MC}}} \hat{\mathbf{Y}}^{(n)}(t)$
\State Compute covariance: $\Sigma_{\hat{Y}}(t) \gets \frac{1}{N_{\text{MC}}-1} \sum_{n=1}^{N_{\text{MC}}} (\hat{\mathbf{Y}}^{(n)}(t) - \hat{\mathbf{Y}}(t))(\hat{\mathbf{Y}}^{(n)}(t) - \hat{\mathbf{Y}}(t))^\top$
\Return $\hat{\mathbf{Y}}(t)$, $\Sigma_{\hat{Y}}(t)$
\end{algorithmic}
\end{algorithm}

\subsection{ADE Algorithms}
Algorithm~\ref{alg:ade_classify} details the online classification and statistical validation procedure used by the ADE at inference time.

\begin{algorithm}[htb]
\scriptsize
\caption{ADE Online Classification with Statistical Validation}
\label{alg:ade_classify}
\begin{algorithmic}[1]
\Require Residual window $\mathbf{R}_W(t) \in \mathbb{R}^{W \times d_y}$, trained ADE $\boldsymbol{\Theta}_{\text{ADE}}$
\Require Reference normal embeddings $\mathcal{Z}_{\mathcal{N}}$ ($|\mathcal{Z}_{\mathcal{N}}| \geq 200$),
MMD threshold $\tau_{\text{MMD}}$ (precomputed via $B{=}200$ permutations at $\alpha{=}0.05$)
\Ensure Predicted class $\mathcal{C}(t)$, confidence $p_{\mathcal{C}}$
\State Encode: $\mathbf{z}(t) \gets \text{BiGRU}(\mathbf{R}_W(t); \boldsymbol{\Theta}_{\text{GRU}})$
\State Compute class probabilities: $\{P(\mathcal{C}(t) = c \mid \mathbf{z}(t))\}_{c \in \{\mathcal{N}, \mathcal{A}_S, \mathcal{A}_M\}}$ using Eq.~\eqref{eq:method_softmax}
\State Predicted class: $\hat{c} \gets \arg\max_c P(\mathcal{C}(t) = c \mid \mathbf{z}(t))$
\State Confidence: $p_{\hat{c}} \gets \max_c P(\mathcal{C}(t) = c \mid \mathbf{z}(t))$
\State \textbf{MMD permutation test:} Compute $\widehat{\text{MMD}}_u^2 \gets \text{MMD}_u^2(\{\mathbf{z}(t)\}, \mathcal{Z}_{\mathcal{N}})$ via Eq.~\eqref{eq:mmd_unbiased}; reject $H_0$ if $\widehat{\text{MMD}}_u^2 > \tau_{\text{MMD}}$ (threshold from $B{=}200$ permutations on $\mathcal{Z}_{\mathcal{N}}$, $\alpha{=}0.05$)
\If{$\hat{c} \neq \mathcal{N}$ \textbf{and} $p_{\text{stat}} < \delta$}
    \State Confirm anomaly: $\mathcal{C}(t) \gets \hat{c}$
\ElsIf{$\hat{c} \neq \mathcal{N}$ \textbf{and} $p_{\text{stat}} \ge \delta$}
    \State Suppress alarm: $\mathcal{C}(t) \gets \mathcal{N}$, $p_{\hat{c}} \gets p_{\hat{c}} \cdot 0.5$ \Comment{Downweight confidence}
\Else
    \State $\mathcal{C}(t) \gets \hat{c}$
\EndIf
\Return $\mathcal{C}(t)$, $p_{\hat{c}}$
\end{algorithmic}
\end{algorithm}

\subsection{Resilient Control Algorithm}

\begin{algorithm}[htb]
\tiny
\caption{DT-Guided Resilient Control with Dynamic Safety Margins}
\label{alg:resilient_control}
\begin{algorithmic}[1]
\Require Current DT prediction $\hat{\mathbf{Y}}(t)$, uncertainty
         $\Sigma_{\hat{Y}}(t)$, risk probabilities
         $\{P(\mathcal{C}{=}\cdot|t)\}$, safe setpoints
         $\mathbf{Y}_{\text{safe}}$
\Require Nominal control $\mathbf{u}_{\text{nom}}(t)$, horizon $H$,
         matrices $Q, R, P$, Successive Jacobians $A(k)$, noise $\Sigma_\omega$,
         EMA state $\bar{\zeta}(t{-}1)$, filter $\mu=0.2$
\Require Actuator limits $[\mathbf{u}_{\min}, \mathbf{u}_{\max}]$,
         safety bounds $[\mathbf{Y}_{\min}, \mathbf{Y}_{\max}]$,
         terminal set $\mathcal{Y}_f$, gain $K$
\Ensure Control action $\mathbf{u}^*(t)$
\State \textbf{Step 1 — Stealth risk (EMA-smoothed):}
\State $\zeta(t) \gets \frac{P(\mathcal{A}_S) + \omega \cdot P(\mathcal{A}_M)}{P(\mathcal{N}) + \varepsilon_0}$
       \Comment{Eq.~\eqref{eq:stealth_risk_zeta}, $\omega{=}2.0$, $\varepsilon_0{=}10^{-6}$}
       \Comment{Eq.~\eqref{eq:stealth_risk_zeta}}
\State $\bar{\zeta}(t) \gets (1-\mu)\,\bar{\zeta}(t{-}1) + \mu\,\zeta(t)$
       \Comment{Eq.~\eqref{eq:ema}}
\State \textbf{Step 2 — Propagate uncertainty over horizon:}
\State $\Sigma_{\hat{Y}}(t) \gets$ (from MC Dropout, Algorithm~\ref{alg:dt_prediction})
\For{$k = t$ \textbf{to} $t+H-2$}
    \State $\Sigma_{\hat{Y}}(k{+}1) \gets
            A(k)\,\Sigma_{\hat{Y}}(k)\,A(k)^\top + \Sigma_\omega$
            \Comment{Eq.~\eqref{eq:sigma_propagation}}
\EndFor
\For{$k = t$ \textbf{to} $t+H-1$}
    \State $\bar{\zeta}(k) \gets \max\!\left(\bar{\zeta}(t) \cdot \exp\!\left(-\lambda_d (k - t)\right),\, \zeta_{\min}\right)$
           \Comment{Eq.~\eqref{eq:zeta_horizon}}
\EndFor
\State \textbf{Step 3 — Compute safety margins:}
\For{$k = t$ \textbf{to} $t+H-1$}
    \State $\boldsymbol{\epsilon}_{\text{margin}}(k) \gets
           \kappa\cdot\mathrm{diag}(\Sigma_{\hat{Y}}(k)^{1/2})
           + \eta\cdot\bar{\zeta}(k)$
           \Comment{Eq.~\eqref{eq:uncertainty_margin_total}}
\EndFor
\State \textbf{Step 4 — Solve MPC:}
\State Set up objective Eq.~\eqref{eq:method_mpc_objective} with
       constraints~\eqref{eq:mpc_dynamics}--\eqref{eq:terminal_constraint}
       and margins from Step~3
\State $\mathbf{U}^* \gets \arg\min J(\mathbf{U})$
       \Comment{SQP/IPOPT, warm-started from $t{-}1$}
\State $\mathbf{u}^*(t) \gets \mathbf{U}^*[0]$
\State \textbf{Step 5 — Rate limiter:}
\State $\Delta\mathbf{u} \gets \mathbf{u}^*(t) - \mathbf{u}(t{-}1)$
\If{$\|\Delta\mathbf{u}\| > \Delta\mathbf{u}_{\max}$}
    \State $\mathbf{u}^*(t) \gets \mathbf{u}(t{-}1)
           + \Delta\mathbf{u}_{\max}\cdot
             \frac{\Delta\mathbf{u}}{\|\Delta\mathbf{u}\|}$
\EndIf
\State \textbf{Step 6 — Recovery check:}
\If{$\mathcal{C}(t) = \mathcal{N}$ for last $T_{\text{confirm}}$ samples}
    \If{$\delta_{\text{MP}}(t) < \tau_{\text{recovery}}$
        for $T_{\text{val}}$ seconds
        \Comment{Eq.~\eqref{eq:mismatch}}}
        \State Transition controller to NOMINAL mode
    \EndIf
\EndIf
\Return $\mathbf{u}^*(t)$
\end{algorithmic}
\end{algorithm}

\subsubsection{Network Architecture Details}
The DT utilizes a 4-layer TCN (dilations 1, 2, 4, 8; kernel size 3; hidden units 128) with MC Dropout ($N_{\text{MC}}=50$, $p=0.1$). The ADE employs a BiGRU (64 units per direction) paired with a 3-way softmax classification head.

\subsubsection{Training Hyperparameters}
DT: learning rate $\eta=0.001$, batch size $B=64$, epochs $E=200$, patience $P=20$, weight decay $10^{-5}$, Adam optimizer with ReduceLROnPlateau scheduler. ADE: $\eta=0.0005$, $B=32$, $E=150$, $P=15$, weight decay $10^{-4}$, CosineAnnealingLR scheduler.

\subsubsection{Data Preprocessing}
Sensor data are preprocessed by clipping values beyond $\pm5\sigma$ from the rolling mean and then normalized using Z-score statistics from the training set. Signals are resampled to 1\,Hz when needed. Missing values are forward-filled for gaps $<5$\,s and linearly interpolated for longer gaps.

\subsection{Simulation-Based Attack Generation}
\subsubsection{Single-Stage Attack Injection}
Target stage $s \in \{1,\ldots,S\}$ is selected. Attack type is chosen (sensor FDI, actuator change, or setpoint modification). Attack profile is defined (constant bias, ramp, or periodic), injection is applied for 60–300s. Control logic configured so disturbance stays inside selected stage.

\subsubsection{Multi-Stage Attack Injection}
Multiple target stages are selected (at least two). Timing can be synchronized or sequential. Design uses stage dependencies to keep behaviour realistic. Attack runs for 300–900s to allow cross-stage effects appear.

\subsection{Statistical Validation Procedures}

\subsubsection{MMD Two-Sample Test}
\begin{equation}\label{eq:mmd_unbiased}
\begin{aligned}
\text{MMD}_u^2 &= \frac{1}{m(m-1)} \sum_{i \neq j} k(\mathbf{z}_i, \mathbf{z}_j) 
+ \frac{1}{n(n-1)} \sum_{i \neq j} k(\mathbf{z}'_i, \mathbf{z}'_j) \\
&\quad - \frac{2}{mn} \sum_{i,j} k(\mathbf{z}_i, \mathbf{z}'_j)
\end{aligned}
\end{equation}

where $\{\mathbf{z}_i\}_{i=1}^m$ are test samples and $\{\mathbf{z}'_j\}_{j=1}^n$ are reference normal samples. The null hypothesis of identical distributions is rejected if MMD$_u^2$ exceeds the critical value at significance level $\alpha = 0.05$.

\subsubsection{Bootstrap Confidence Intervals}
For comparing framework performance against baselines, we first generate $B = 10{,}000$ bootstrap samples with replacement. Next, we compute the performance metric difference $\Delta_b$ for each sample. A 95\% confidence interval is then constructed using the percentile method. Finally, the improvement is considered significant if the interval does not include zero.

\printcredits

\bibliographystyle{cas-model2-names}

\bibliography{cas-refs}







\end{document}